\def\picill#1by#2(#3)
\vfill\epsffile{#3}}}
\newcommand{\eq}{\begin{equation}}
\newcommand{\en}{\end{equation}}
\newcommand{\eqa}{\begin{eqnarray}}
\newcommand{\ena}{\end{eqnarray}}
\newcommand{\tr}{\mathrm{tr}}
\newtheorem{theorem}{Theorem}
\newtheorem{corollary}{Corollary}
\begin{document}

\setlength{\unitlength}{1mm}

\thispagestyle{empty}

%\vspace*{0.1cm}

 \begin{center}
{\large{\bf  Quantum teleportation and Birman--Murakami--Wenzl algebra}}
  \\[4mm]

{\small Kun Zhang \footnote{kun\_zhang@whu.edu.cn} and Yong Zhang \footnote{yong\_zhang@whu.edu.cn}}\\[3mm]

Center for Theoretical Physics, Wuhan University, Wuhan 430072, P. R. China

School of Physics and Technology, Wuhan University, Wuhan 430072, P. R. China \\[1cm]

\end{center}

\vspace{0.2cm}

\begin{center}
\parbox{13cm}{
\centerline{\small  \bf Abstract}  \noindent

 In this paper, we investigate the relationship of  quantum teleportation in quantum information science and
 the Birman--Murakami--Wenzl (BMW) algebra in low-dimensional topology. For simplicity, we
 focus on the two spin-1/2 representation of the BMW algebra, which is generated by both the Temperley--Lieb projector
 and the Yang--Baxter gate. We describe quantum teleportation using the Temperley--Lieb projector and the Yang--Baxter gate,
 respectively, and study teleportation-based quantum computation using the Yang--Baxter gate. On the other hand,  we exploit
 the extended Temperley--Lieb diagrammatical approach to clearly show that the tangle relations of  the BMW algebra have a natural
 interpretation of quantum teleportation. Inspired by this interpretation, we construct a general representation of the tangle relations
 of the BMW algebra and obtain  interesting representations of the BMW algebra. Therefore our research sheds a light on a
 link between quantum information science and low-dimensional topology.

}

\end{center}

\vspace{.2cm}

\begin{tabbing}
{\bf \small Key Words:} Teleportation, Quantum computation, Birman--Murakami--Wenzl algebra\\[.2cm]

{\bf \small  PACS numbers:} 03.67.Lx, 03.65.Ud, 02.10.Kn
\end{tabbing}

%\newpage

\section{Introduction}

Quantum entanglement \cite{NC2011, Preskill97} is the key reason why the quantum information processing outperforms the classical information processing, and has been widely exploited in quantum information and computation science. With the help of quantum entanglement, an unknown quantum state can be transported from one place to another place, dubbed quantum teleportation \cite{BBCCJPW93, Vaidman94, BDM00, Werner01}. On the other respect, there are natural similarities between quantum entanglement and topological entanglement \cite{Aravind97, KL02}, where the latter characterizes topological configurations of links or knots \cite{Kauffman02}. Nontrivial unitary solutions of the Yang--Baxter
equation \cite{YBE67}, called the Yang--Baxter gates, have been introduced to clarify such similarities \cite{Dye03, KL04, ZKG04, AJJ16}. They can detect knots or links and can be viewed as quantum gates to perform universal quantum computation as well. Note that a thorough understanding about a relation between quantum entanglement and topological entanglement remains unclear.

The Yang--Baxter equation \cite{YBE67} arose in the study of both 1+1-dimensional quantum many-body systems and vertex models in statistics physics, and its solution naturally gives rise to a representation of the braid group describing links or knots \cite{Kauffman02}. Both the  Temperley--Lieb algebra \cite{TL71} and the  BMW algebra \cite{BW89} are exploited
in the systematic construction of solutions of the Yang--Baxter equation  \cite{Jones89,CGX91,WXSZHW10}. The Temperley--Lieb algebra is associated with solutions of the Yang--Baxter equation with two distinctive eigenvalues and is related to the Jones polynomial in knot theory \cite{Kauffman02}, whereas the BMW algebra is associated with solutions of the Yang--Baxter equation with three distinctive eigenvalues and is related to the Kauffman polynomial in knot theory \cite{Kauffman02}.

Previous research has extensively studied the relationship between quantum teleportation, the Temperley--Lieb algebra and the Yang--Baxter equation \cite{Zhang06,ZZP15}. The extended Temperley--Lieb diagrammatical approach \cite{Zhang06,ZZP15} is devised to characterize topological features of quantum entanglement and quantum teleportation. However, the former research either concentrates on the topic of quantum teleportation using the Yang--Baxter gate or on the topic of quantum teleportation using the Temperley--Lieb projector. Since a representation of the BMW algebra is generated by both the Yang--Baxter gate and the Temperley--Lieb projector, there remains a natural question to be answered: what about quantum teleportation using the BMW algebra? In this paper, we investigate this problem  and expect to find something novel which is not presented in \cite{Zhang06,ZZP15}.

In the two spin-1/2 representation of the BMW algebra \cite{WXSLZ15,BMW11}, we view the Temperley--Lieb projector as a two-qubit quantum measurement operator and the unitary braid representation as a two-qubit entangling gate (the Yang--Baxter gate). We show that the Temperley--Lieb projector and the Yang--Baxter gate are capable of performing the quantum teleportation protocol, respectively. Besides, we study the teleportation-based quantum computation \cite{GC99,Nielsen03} using the Yang--Baxter gate. Furthermore, we realize that the tangle relations defining the BMW algebra involving both the Temperley--Lieb projector and the Yang--Baxter gate give rise to the teleportation protocol directly. Moreover, we are able to construct interesting representations of the BMW algebra in the extended Temperley--Lieb diagrammatical approach \cite{Zhang06,ZZP15} different from the representation
in \cite{WXSLZ15,BMW11}.

 The paper is organized as follows. Section \ref{section review BMW algebra} reviews the two spin-1/2 realization of the BMW algebra.  Section \ref{section teleportation TL} and \ref{section teleportation YBG} perform the quantum teleportation protocol via the Temperley--Lieb projector and the Yang--Baxter gate, respectively. Section \ref{section teleportation quantum computation YBG} is about the teleportation-based quantum computation set by the Yang--Baxter gate. Section \ref{section interpretation BMW teleportation} presents the quantum teleportation interpretation of the tangle relations in the BMW algebra. Section \ref{section construction rep from tangle relation} describes a  method of constructing a general representation of the tangle relations of the BMW algebra. Section \ref{section concluding remarks} is on concluding remarks. Four appendices are added to complete the paper.  Appendix \ref{appendix Brauer algebra} reviews a relation between the Brauer algebra and quantum teleportation \cite{Zhang06}, since the BMW algebra is a deformation of the Brauer algebra \cite{Brauer37}. Appendix \ref{appendix further study TL B} collects the extended Temperley--Lieb configurations of the Yang--Baxter gate which are not in the main context of the paper. Appendix \ref{appendix trial solution tangle relation} shows the procedure of constructing interesting representations of the BMW algebra.
 Appendix \ref{skew transpose} introduces the new conventions and notations to simplify some complicated algebraic relations in the paper.

\section{Review on the BMW algebra}

\label{section review BMW algebra}

In this section, we make a brief sketch on the definition of the BMW algebra \cite{BW89} and its two spin-1/2 representation \cite{WXSLZ15} in the viewpoint of quantum
information and computation. Meanwhile, we set up notations and conventions for the whole paper. In addition, we make a simple study on the Brauer algebra \cite{Brauer37}
in Appendix \ref{appendix Brauer algebra}  whose deformation is the BMW algebra.

The BMW algebra $B_n(w,\sigma)$ \cite{BW89} contains two complex parameters $w$ and $\sigma$, and it has two types of generators: the one denoted as $e_i$ associated with the Temperley--Lieb algebra \cite{TL71} and the other one denoted as $b_i$ associated with the braid group \cite{Kauffman02}, with $i=1, \ldots, n-1$.
The Temperley--Lieb idempotents $e_i$ satisfy the defining relations of the Temperley--Lieb algebra,
\eq
\label{definition BMW algebra_TL}
  \begin{array}{ll}
    e_{i}^2=e_{i}, &  e_{i}e_{i\pm 1}e_{i}=d^{-2}e_{i}, \\
    e_ie_j=e_j e_i, & |i-j|\geq 2,
  \end{array}
\en
with $d$ called the loop parameter. The braid generators $b_i$ satisfy the defining relations of the braid group,
\eq
\label{definition BMW algebra_YBE}
  \begin{array}{ll}
    b_{i}b_{i\pm 1}b_{i}=b_{i\pm 1}b_{i}b_{i\pm 1}, &  \\
    b_ib_j=b_jb_i, & |i-j|\geq 2.
  \end{array}
\en
The first type of the mixed relations between the Temperley--Lieb idempotents $e_i$ and the  braid generators $b_i$ are given by
\eq
\label{definition BMW algebra_Mixed}
  \begin{array}{ll}
    b_{i}-b_{i}^{-1}=w (1\!\!1-de_i), &  \\
    e_{i}b_i=b_{i}e_{i}=\sigma e_i, &  \\
    b_{i\pm 1}e_{i}b_{i\pm 1}=b_{i}^{-1}e_{i\pm 1}b_{i}^{-1}, &
  \end{array}
\en
where $1\!\!1$ denotes the identity operator, and the second type of the mixed relations are given by
\eq
\label{definition BMW algebra_tangle}
  \begin{array}{ll}
    b_{i\pm 1}b_{i}e_{i\pm 1}=e_{i}b_{i\pm 1}b_{i}=de_{i}e_{i\pm 1},
  \end{array}
\en
which are called the tangle relations in this paper. Note that there is a constraint relation among the three parameters $d, \sigma, w$
given by $d=1-(\sigma-\sigma^{-1})/w$.

A tensor product representation of the BMW algebra $B_n(w,\sigma)$ can be in general constructed  in the following way. The Temperley--Lieb idempotents
$e_i$ and the braid generators $b_i$ assume the respective forms,
\eqa
e_i &=& 1\!\!1^{\otimes(i-1)}\otimes E\otimes 1\!\!1^{\otimes(n-i-1)},\\
b_i &=& 1\!\!1^{\otimes(i-1)}\otimes B\otimes 1\!\!1^{\otimes(n-i-1)},
\ena
where the symbol $E$ is called the Temperley--Lieb matrix and the symbol $B$ is called the braid matrix. It means that the Temperley--Lieb idempotents
$e_i$ and the braid generators $b_i$ are acting on the vector space of the $i$th and $i+1$th sites. Furthermore, the braid matrix $B$ has three distinctive eigenvalues denoted as $\lambda_1$, $\lambda_2$ and $\lambda_3$ with the constraint relation $\lambda_2\lambda_3=-1$. The parameter $\sigma$ is set as $\sigma=\lambda_1$, and the parameter $w$ is set as
$w=\lambda_2+\lambda_3$.

%\subsection{Two spin-1/2 representation of the BMW algebra}

 A spin-1/2 representation is characterized by the two-dimensional Hilbert space $\mathcal H_2$ with the basis states $|0\rangle$ and $|1\rangle$.  In convention, the state $|0\rangle$ stands for spin up and the state $|1\rangle$ for spin down. A single qubit in quantum information and computation \cite{NC2011, Preskill97} can be physically realized
 as a spin-1/2 representation, and its quantum state has the form $|\alpha\rangle=a|0\rangle+b|1\rangle$ with complex numbers $a$ and $b$ satisfying $|a|^2+|b|^2=1$. A quantum gate is defined as a unitary transformation acting on qubits, for example, a single qubit gate including the identity operator $1\!\!1_2$, the Pauli $X$ gate and the Pauli $Z$ gate
 given by
\eq
\label{definition Pauli gates}
1\!\! 1_2 =\left(\begin{array}{cc}
              1 & 0 \\
              0 & 1 \\
            \end{array}\right), \quad X=\left(\begin{array}{cc}
              0 & 1 \\
              1 & 0 \\
            \end{array} \right), \quad  Z=\left(\begin{array}{cc}
              1 & 0 \\
              0 & -1 \\
            \end{array} \right).
\en
Note that $Z|i\rangle=(-1)^i|i\rangle$ with $i=0, 1$ and the Pauli $Y$ gate is defined as $Y=ZX$.

A two spin-1/2 representation, which can be recognized as a physical realization of a two-qubit system, is described by the four-dimensional Hilbert space $\mathcal H_2\otimes \mathcal H_2$ with the tensor product basis states $|i,j\rangle$ or $|ij\rangle$ with $i,j=0,1$. An example for the two spin-1/2 representation of the BMW algebra has been shown
in \cite{WXSLZ15}, in which the associated Temperley--Lieb matrix $E$ has the form
\eq
\label{TL matrix of BMW algebra}
E = \frac 1 4\left(
             \begin{array}{cccc}
               1 & ie^{-i\phi} & ie^{-i\phi} & e^{-2i\phi} \\
               -ie^{i\phi} & 1 & 1 & -ie^{-i\phi} \\
               -ie^{i\phi} & 1 & 1 & -ie^{-i\phi} \\
               e^{2i\phi} & ie^{i\phi} & ie^{i\phi} & 1 \\
             \end{array}
           \right),
\en
with the real number $\phi$, and the associated braid matrix $B$ has the form
\eq
\label{braid matrix of BWM algebra}
B = \frac {e^{i\frac {3\pi}{4}}}{2}\left(
                                                            \begin{array}{cccc}
                                                              1 & -e^{-i\phi} & -e^{-i\phi} & -e^{-2i\phi} \\
                                                              e^{i\phi} & 1 & -1 & e^{-i\phi} \\
                                                              e^{i\phi} & -1 & 1 & e^{-i\phi} \\
                                                              -e^{2i\phi} & -e^{i\phi} & -e^{i\phi} & 1 \\
                                                            \end{array}
                                                          \right),
\en
 with three distinctive eigenvalues $\lambda_1=e^{i5\pi/4}$, $\lambda_2=e^{i3\pi/4}$ and $\lambda_3=e^{i\pi/4}$, $\lambda_2$ being double-degenerated.
  The parameters $\sigma$, $w$ and $d$ in the BMW algebra $B_n(w,\sigma)$ can be calculated as
\eq
\sigma=\lambda_1=e^{i5\pi/4}, \quad w=\lambda_2+\lambda_3=\sqrt 2 i, \quad d=1-(\sigma-\sigma^{-1})/w=2.
\en

 In this paper, since the Temperley--Lieb matrix $E$ satisfies the idempotent relation $E^2=E$, it is viewed as  a two-qubit projective measurement operator.
 On the other respect, the braid matrix $B$ is unitary satisfying $B^\dag B=BB^\dag=1\!\!1_4$, and thus it is a two-qubit quantum gate \cite{NC2011, Preskill97},
 called the Yang--Baxter gate \cite{Zhang06} in the following study.

\section{Quantum teleportation using the Temperley--Lieb projector}

\label{section teleportation TL}

In this section, we construct a complete set of two-qubit projective measurement operators with the Temperley--Lieb projector $E$ (\ref{TL matrix of BMW algebra})
as a special case, and then exploit such a set to describe the quantum teleportation protocol in the extended Temperley--Lieb diagrammatical approach \cite{Zhang06, ZZP15}.

\subsection{The Temperley--Lieb projector as a two-qubit measurement operator}

\label{subsection Temperley--Lieb projector}

In quantum information and computation \cite{NC2011, Preskill97}, the well-known complete set of two-qubit projective measurement operators is formed by
the Bell states \cite{ZZP15, ZZ14} given by
\eq
\label{definition four Bell states}
|\psi(ij)\rangle=(1\!\!1_2\otimes W_{ij})|\Psi\rangle, \quad i,j=0,1,
\en
with the  single-qubit gate $W_{ij}=X^iZ^j$ and the EPR pair $|\Psi\rangle=\frac 1 {\sqrt 2}(|00\rangle+|11\rangle)$, and the Bell projective measurement
operator is denoted by $|\psi(ij)\rangle\langle\psi(ij)|$. Note that the Bell states $|\psi(ij)\rangle$ form an
orthonormal basis for the two-qubit Hilbert space $\mathcal H_2\otimes \mathcal H_2$, named the Bell basis. The orthonormal condition for the Bell basis (or
the orthonormal condition for the Bell projective measurement operators) is given by
\eq
\langle\psi(i_2j_2)|\psi(i_1j_1)\rangle=\delta_{i_1i_2}\delta_{j_1j_2},
\en
with the Kronecker delta function $\delta_{ij}=1$ for $i=j$ and $\delta_{ij}=0$ for $i\neq j$, which is equivalent to
\eq
\frac 1 2\tr\left(W^\dag_{i_2j_2}W_{i_1j_1}\right)=\delta_{i_1i_2}\delta_{j_1j_2},
\en
with the index $\dag$ denoting the Hermitian conjugation.

In view of the construction of the Bell projective measurement operators $|\psi(ij)\rangle\langle\psi(ij)|$, we reformulate the  Temperley--Lieb
projector $E$ (\ref{TL matrix of BMW algebra})
as a two-qubit projective operator,
\eq
 \label{definition E 00 projector}
 E=|\Psi_{M_{00}}\rangle\langle\Psi_{M_{00}}|,
 \en
where the Bell-like state $|\Psi_{M_{00}}\rangle$ denotes  the EPR pair $|\Psi\rangle$ with the local action of the single-qubit gate $M_{00}$,
 \eq
 \label{definition M 00 state}
 |\Psi_{M_{00}}\rangle=(1\!\!1_2\otimes M_{00})|\Psi\rangle.
 \en
After calculation, the single-qubit gate $M_{00}$ can be expressed as a product of elementary single-qubit gates, $M_{00}=R_\phi HSH R_{\phi}$.
The symbols $H$, $S$ and $R_\phi$ stand for the Hadamard gate, the phase gate and the phase shift gate \cite{NC2011, Preskill97}, respectively,
and they are given by
\eq
\label{definition H S R gates}
H=\frac 1 {\sqrt 2}\left(
                     \begin{array}{cc}
                       1 & 1 \\
                       1 & -1 \\
                     \end{array}
                   \right), \quad
S=\left(
                     \begin{array}{cc}
                       1 & 0 \\
                       0 & i \\
                     \end{array}
                   \right), \quad
R_\phi= \left(
                     \begin{array}{cc}
                       1 & 0 \\
                       0 & e^{i\phi} \\
                     \end{array}
                   \right).
\en
Note that both the Pauli $Z$ gate (\ref{definition Pauli gates}) and the phase gate $S$ are special cases of the phase shift gate $R_\phi$, namely $Z=R_{\pi}$ and $S=R_{\pi/2}$.

Furthermore, a complete set of two-qubit projective measurement operators $E_{ij}$ including the  Temperley--Lieb
projector $E$ (\ref{TL matrix of BMW algebra}) as a special case, $E=E_{00}$, can be constructed in the way
\eq
\label{definition E ij projector}
E_{ij}=|\Psi_{M_{ij}}\rangle\langle\Psi_{M_{ij}}|, \quad i,j=0,1,
\en
with $|\Psi_{M_{ij}}\rangle=(1\!\!1_2\otimes M_{ij})|\Psi\rangle$,  where the single-qubit gates $M_{ij}$ have the form
\eq
\label{definition M ij gates}
M_{00}=R_\phi HSH R_{\phi},\quad M_{01}=R_\phi ZR_\phi,\quad M_{10}=XZ,\quad M_{11}=R_\phi HS^\dag H R_\phi.
\en
Note that the Bell-like states $|\Psi_{M_{ij}}\rangle$ form an orthonormal basis for the two-qubit Hilbert space $\mathcal H_2\otimes \mathcal H_2$ with
the orthonormal condition given by
\eq
E_{i_1j_1}E_{i_2j_2}=\delta_{i_1i_2}\delta_{j_1j_2}E_{i_1j_1},
\en
equivalent to
\eq
\label{relation orthonormal condition M}
\frac 1 2\tr\left(M^\dag_{i_2j_2}M_{i_1j_1}\right)=\delta_{i_1i_2}\delta_{j_1j_2}.
\en

Here we make remarks about both the Bell states $|\psi(ij)\rangle$ and the Bell-like states $|\Psi_{M_{ij}}\rangle$. First, both states can be exactly determined by
the complete basis of unitary operators $W_{ij}$ (or $M_{ij}$) \cite{VW00}. Second, they are  maximally entangled states in quantum information science
\cite{NC2011, Preskill97}. Third, all of the associated two-qubit projective measurement operators are able to generate the representation of the Temperley--Lieb algebra \cite{Zhang06}, and thus all of them can be called the Temperley--Lieb algebra projector.

\subsection{Extended Temperley--Lieb configuration of quantum teleportation}

\label{subsection TL teleportation}

In the extended Temperley--Lieb diagrammatical approach \cite{Zhang06, ZZP15}, the Bell state $|\Psi\rangle$ is pictured as a cup configuration and its complex
conjugation $\langle\Psi|$ is pictured as a cap configuration; a single-qubit gate acting on the Bell states is pictured as a solid point on associated configurations.
Thus the Bell projective measurement operators $|\psi(ij)\rangle\langle\psi(ij)|$ have a diagrammatical representation shown as
\eqa
\label{TL Bell measurement}
\setlength{\unitlength}{0.6mm}
\begin{array}{c}
\begin{picture}(65,30)
\put(62,18){\line(0,1){12}}
\put(52,18){\line(0,1){12}}
\put(52,18){\line(1,0){10}}
%\put(60.6,35){$\bullet$}
\put(62,24){\circle*{2.}}
\put(64.,23){\tiny{$W_{ij}$}}
\put(6,14){$|\psi(ij)\rangle \langle\psi(ij)|=$}
\put(52,0){\line(0,1){12}}
\put(62,0){\line(0,1){12}}
\put(52,12){\line(1,0){10}}
%\put(60.6,11){$\bullet$}
\put(62,6){\circle*{2.}}
%\put(14.,11){\makebox(6,6){$Z^j X^i$}}
\put(64,5){\tiny{$W_{ij}^\dag$}}
\end{picture}
\end{array}
\ena
where the diagram is read from the bottom to the top corresponding to the convention that the algebraic expression is read from the right to the left.
Furthermore, the solid point representing a single-qubit gate $U$ on the cup or cap configuration is able to flow from the one branch to the other branch
with an additional transpose on such the gate $U$, shown in
\eqa
\setlength{\unitlength}{0.6mm}
\begin{array}{c}
\begin{picture}(45,16)
\put(2,0){\line(0,1){12}}
\put(12,0){\line(0,1){12}}
\put(2,0){\line(1,0){10}}
%\put(10.6,11){$\bullet$}
\put(12,6){\circle*{2.}}
%\put(14.,11){\makebox(6,6){$W_{ij}$}}
\put(14.,5){\tiny{$U$}}
\put(23.,6){$=$}
\put(32,5){\tiny{$U^T$}}
%\put(42.5,11){$\bullet$}
 \put(40.8,6){\circle*{2.}}
\put(40.8,0){\line(0,1){12}}
\put(50.8,0){\line(0,1){12}}
\put(40.8,0){\line(1,0){10}}
\end{picture}
\end{array}
\ena
which is related to the algebraic formula
\eq
\label{relation flowing gate Bell state}
(1\!\!1_2\otimes U)|\Psi\rangle=(U^T\otimes 1\!\!1_2)|\Psi\rangle,
\en
with the symbol $T$ denoting the matrix transpose. Note that the property that a single-qubit gate flows on the configuration plays a crucial role in the
extended Temperley--Lieb diagrammatical approach to quantum teleportation \cite{Zhang06, ZZP15}.

In quantum teleportation \cite{BBCCJPW93, Vaidman94, BDM00, Werner01}, Alice has an unknown qubit $|\alpha\rangle$ to be transmitted to Bob and meanwhile shares
the Bell state $|\Psi\rangle$ with Bob, namely, Alice and Bob prepare the state $|\alpha\rangle\otimes|\Psi\rangle$. Such the initial state has the
extended Temperley--Lieb configuration,
\eqa
\setlength{\unitlength}{0.6mm}
\begin{array}{c}
\begin{picture}(70,16)
\put(10,7){$|\alpha\rangle\otimes |\Psi\rangle=$}
\put(45,0){\makebox(4,4){$\nabla$}}
\put(47,3.7){\line(0,1){10.3}}
\put(57,0){\line(0,1){14}}
\put(67,0){\line(0,1){14}}
\put(57,0){\line(1,0){10}}
\end{picture}
\end{array}
\ena
where the qubit $|\alpha\rangle$ is depicted as a vertical line with the symbol $\nabla$ at the bottom. Now Alice
performs the Bell projective measurement $|\psi(ij)\rangle\langle\psi(ij)|\otimes 1\!\!1_2$ on her qubits, which
is illustrated in the extended Temperley--Lieb configuration,
\eqa
\label{TL teleportation}
\setlength{\unitlength}{0.6mm}
\begin{array}{c}
\begin{picture}(80,45)
%\put(-2,4){\vector(1,0){95}}
%\put(95,4){$x$}
%\put(2,0){\vector(0,1){78}}
%\put(-2,76){$t$}
\put(10,0){\makebox(4,4){$\nabla$}}
\put(12.,4){\line(0,1){20}}
\put(22,18){\circle*{2.}}
\put(23,17){\tiny{$W_{ij}^\dag$}}
\put(22,0){\line(0,1){24}}
\put(32,0){\line(0,1){42}}
\multiput(8,12)(1,0){30}{\line(1,0){.5}}
\put(22,0){\line(1,0){10}}
\put(12,24){\line(1,0){10}}
\put(12,30){\line(1,0){10}}
\put(12,30){\line(0,1){12}}
\put(22,30){\line(0,1){12}}
\put(22,36){\circle*{2.}}
\put(23,35){\tiny{$W_{ij}$}}
\put(40,21){\makebox(14,10){$=\frac 1 2$}}
\put(62,30){\line(1,0){10}}
\put(62,30){\line(0,1){12}}
\put(72,30){\line(0,1){12}}
\put(72,36){\circle*{2.}}
\put(73,35){\tiny{$W_{ij}$}}
%\put(72,2){\line(0,1){60}}
\put(80,0){\makebox(4,4){$\nabla$}}
\put(82,4){\line(0,1){38}}
\put(82,21){\circle*{2.}}
%\put(84,39){\tiny{$W_{ij}^\ast$}}
\put(84,20){\tiny{$W_{ij}$}}
\end{picture}
\end{array}
\ena
with the dashed line denoting the time boundary between the initial state and the Bell measurements. On the left of the diagram, the identity operator $1\!\!1_2$ is drawn as the single vertical line; after the single-qubit gate $W^\dag_{ij}$ flows from Alice's system to Bob's system with the transpose, $(W^\dag_{ij})^T=W_{ij}$, the unknown qubit $|\alpha\rangle$   has been transferred from Alice to Bob because of the topological deformation. On the right of the diagram, the factor $1/2$ is a normalization factor contributed by a pair of vanishing cup and cap configurations. Therefore, this diagram (\ref{TL teleportation}) is related to the algebraic formalism
 \eq
 \label{teleportation equation projector}
 (|\psi(ij)\rangle\langle\psi(ij)|\otimes 1\!\! 1_2)(|\alpha\rangle\otimes |\Psi\rangle)=\frac 1 2 |\psi(ij)\rangle \otimes W_{ij}  |\alpha\rangle,
 \en
 which is formulated, with the completeness relation of Bell projective measurement operators, $\sum_{i,j=0}^1|\psi(ij)\rangle\langle\psi(ij)|=1\!\!1_4$, as
 \eq
 \label{teleportation equation original}
 |\alpha\rangle \otimes |\Psi\rangle =\frac 1 2 \sum_{i,j=0}^1  |\psi(ij)\rangle \otimes W_{ij} |\alpha\rangle,
 \en
called the teleportation equation in \cite{ZZP15, ZZ14}. Finally, Bob has to acquire the Bell measurement results labeled as $(i,j)$ from Alice in order to apply the unitary correction operations  $W^\dag_{ij}$ on his qubit to obtain the unknown state $|\alpha\rangle$. Note that the teleportation protocol of Bob sending an unknown qubit $|\alpha\rangle$
to Alice can be characterized in
  \eq
   \label{teleportation equation original transpose}
   |\Psi\rangle\otimes|\alpha\rangle  =\frac 1 2 \sum_{i,j=0}^1  W^T_{ij} |\alpha\rangle\otimes|\psi(ij)\rangle,
  \en
 called the transpose teleportation equation in \cite{ZZ14}.

  \subsection{Quantum teleportation using the Temperley--Lieb projector $E_{ij}$ (\ref{definition E ij projector})}

  In quantum teleportation \cite{BBCCJPW93, Vaidman94, BDM00, Werner01}, we are allowed to replace the initial maximal entanglement resource $|\Psi\rangle$ with the
  Bell-like state $|\Psi_{M_{00}}\rangle$ (\ref{definition M 00 state}) and replace the Bell projective measurement operator $|\psi(ij)\rangle\langle\psi(ij)|$ with
  the Bell-like projective operator $E_{ij}$ (\ref{definition E ij projector}). Similar to the teleportation configuration (\ref{TL teleportation}) using $|\Psi\rangle$
  and $|\psi(ij)\rangle\langle\psi(ij)|$, the teleportation of an unknown qubit $|\alpha\rangle$ from Alice to Bob using $|\Psi_{M_{00}}\rangle$ and $E_{ij}$
  has the extended Temperley--Lieb configuration given by
\eqa
\label{TL teleportation M 00}
\setlength{\unitlength}{0.6mm}
\begin{array}{c}
\begin{picture}(80,45)
%\put(-2,4){\vector(1,0){95}}
%\put(95,4){$x$}
%\put(2,0){\vector(0,1){78}}
%\put(-2,76){$t$}
\put(32,6){\circle*{2.}}
\put(33,5){\tiny{$M_{00}$}}
\put(10,0){\makebox(4,4){$\nabla$}}
\put(12.,4){\line(0,1){20}}
\put(22,18){\circle*{2.}}
\put(23,17){\tiny{$M_{ij}^\dag$}}
\put(22,0){\line(0,1){24}}
\put(32,0){\line(0,1){42}}
\multiput(8,12)(1,0){30}{\line(1,0){.5}}
\put(22,0){\line(1,0){10}}
\put(12,24){\line(1,0){10}}
\put(12,30){\line(1,0){10}}
\put(12,30){\line(0,1){12}}
\put(22,30){\line(0,1){12}}
\put(22,36){\circle*{2.}}
\put(23,35){\tiny{$M_{ij}$}}
\put(40,21){\makebox(14,10){$=\frac 1 2$}}
\put(62,30){\line(1,0){10}}
\put(62,30){\line(0,1){12}}
\put(72,30){\line(0,1){12}}
\put(72,36){\circle*{2.}}
\put(73,35){\tiny{$M_{ij}$}}
%\put(72,2){\line(0,1){60}}
\put(80,0){\makebox(4,4){$\nabla$}}
\put(82,4){\line(0,1){38}}
\put(82,21){\circle*{2.}}
%\put(84,39){\tiny{$W_{ij}^\ast$}}
\put(84,20){\tiny{$M_{00}M^*_{ij}$}}
\end{picture}
\end{array}
\ena
corresponding to the algebraic formula
\eq
\label{teleportation equation half}
(|\Psi_{M_{ij}}\rangle\langle\Psi_{M_{ij}}|\otimes 1\!\! 1_2)(|\alpha\rangle\otimes |\Psi_{M_{00}}\rangle)
=\frac 1 2 |\Psi_{M_{ij}}\rangle \otimes M_{00}M^*_{ij}  |\alpha\rangle
\en
with the symbol $*$ denoting the complex conjugation,  which can be reformulated as the form of the teleportation equation,
\eq
\label{teleportation equation M 00}
 |\alpha\rangle \otimes |\Psi_{M_{00}}\rangle =\frac 1 2 \sum_{i,j=0}^1  |\Psi_{M_{ij}}\rangle \otimes M_{00}M^*_{ij} |\alpha\rangle.
 \en
 It is worth mentioning that the single-qubit gate $M_{00}$ initially acting on the Bell state $|\Psi\rangle$  has been transferred
 to Bob from Alice and has become  the single-qubit gate acting on Bob's qubit.

 Furthermore, the extended Temperley--Lieb configuration of teleportation of an unknown qubit $|\alpha\rangle$ from Bob to Alice
 can be drawn as
 \eqa
\label{TL teleportation transposed M 00}
\setlength{\unitlength}{0.6mm}
\begin{array}{c}
\begin{picture}(80,45)
\put(30,0){\makebox(4,4){$\nabla$}}
\put(32.,4){\line(0,1){20}}
\put(22,0){\line(0,1){24}}
\put(12,0){\line(0,1){42}}
\put(12,0){\line(1,0){10}}
\put(22,24){\line(1,0){10}}
\put(22,30){\line(1,0){10}}
\put(22,30){\line(0,1){12}}
\put(32,30){\line(0,1){12}}
\put(32,36){\circle*{2.}}
\put(33,35){\tiny{$M_{ij}$}}
\put(32,18){\circle*{2.}}
\put(33,17){\tiny{$M_{ij}^\dag$}}
\put(22,6){\circle*{2.}}
\put(23,5){\tiny{$M_{00}$}}
\multiput(8,12)(1,0){30}{\line(1,0){.5}}
\put(40,21){\makebox(14,10){$=\frac 1 2$}}
\put(72,30){\line(1,0){10}}
\put(72,30){\line(0,1){12}}
\put(82,30){\line(0,1){12}}
\put(82,36){\circle*{2.}}
\put(83,35){\tiny{$M_{ij}$}}
%\put(72,2){\line(0,1){60}}
\put(60,0){\makebox(4,4){$\nabla$}}
\put(62,4){\line(0,1){38}}
\put(62,21){\circle*{2.}}
%\put(84,39){\tiny{$W_{ij}^\ast$}}
\put(64,20){\tiny{$M^T_{00}M^\dag_{ij}$}}
\end{picture}
\end{array}
\ena
which is associated with the transpose teleportation equation
\eq
 \label{teleportation equation M 00 transpose}
 |\Psi_{M_{00}}\rangle \otimes |\alpha\rangle =\frac 1 2 \sum_{i,j=0}^1 M^T_{00}M^\dag_{ij} |\alpha\rangle\otimes |\Psi_{M_{ij}}\rangle.
 \en
Comparing the equation (\ref{teleportation equation original}) and the equation (\ref{teleportation equation original transpose}), we see that the
matrix transpose is performed from $W_{ij}$ to $W^T_{ij}$. By contrast, looking at the equation (\ref{teleportation equation M 00}) and the equation
(\ref{teleportation equation M 00 transpose}), we have no transpose because of $(M_{00}M^*_{ij})^T\neq M^T_{00}M^\dag_{ij}$.

Moreover, when $i=j=0$, the single-qubit gate $M_{00}M^*_{ij}$ in the teleportation equation (\ref{teleportation equation M 00}) is identity,
$M_{00}M^*_{00}=1\!\!1_2$, and the Bell-like state $|\Psi_{M_{00}}\rangle$ can be prepared by applying the Bell-like projective measurement
operator $E_{00}$ (\ref{definition E ij projector}). Thus the teleportation of an unknown qubit $|\alpha\rangle$ from Alice to Bob can be viewed
using the Bell-like projective measurement operator $E_{00}$,
\eq
\label{teleportation equation E 00}
(E_{00}\otimes 1\!\!1_2)(|\alpha\rangle\otimes E_{00})=\frac 1 2 (|\Psi_{M_{00}}\rangle\otimes|\alpha\rangle)(1\!\!1_2\otimes\langle\Psi_{M_{00}}|),
\en
which can be derived from the equation (\ref{teleportation equation half}). Similarly, the teleportation of an unknown qubit $|\alpha\rangle$ from  Bob
to Alice can be described in the way
\eq
\label{teleportation equation E 00 transpose}
(1\!\!1_2\otimes E_{00})(E_{00}\otimes|\alpha\rangle)=\frac 1 2 (|\alpha\rangle\otimes|\Psi_{M_{00}}\rangle)(\langle\Psi_{M_{00}}|\otimes 1\!\!1_2),
\en
 where $M^T_{00}M^\dag_{00}=1\!\!1_2$ has been exploited.  Hence the operators $(E_{00}\otimes 1\!\!1_2)(1\!\!1_2\otimes E_{00})$ and
 $(1\!\!1_2\otimes E_{00})(E_{00}\otimes 1\!\!1_2)$ are capable of describing the teleportation protocol: the projector $E_{00}$ on the right works as the
 state preparation channel and the projector $E_{00}$ on the left as  the Bell measurement. In general, quantum teleportation can be performed using the
 operators $(E_{ij}\otimes 1\!\!1_2)(1\!\!1_2\otimes E_{lm})$ and $(1\!\!1_2\otimes E_{lm})(E_{ij}\otimes 1\!\!1_2)$ where $E_{ij}$ is not required the same
 as $E_{lm}$.

\section{Quantum teleportation using the Yang--Baxter gate}

\label{section teleportation YBG}

In this section, we study the application of the Yang--Baxter gate $B$ (\ref{braid matrix of BWM algebra}) to quantum teleportation. First of all, we show that
this gate can be regarded as a generalization of the Bell transform \cite{ZZ14} which is a unitary basis transformation from the product states to the Bell states.
In view of previous research \cite{ZZ14} of quantum teleportation using the Bell transform, we introduce the extended Temperley--Lieb configurations
of the Yang--Baxter gate $B$ and focus on the extended Temperley--Lieb configuration of the teleportation operator $(B\otimes 1\!\!1_2)(1\!\!1_2\otimes B)$.

\subsection{The Yang--Baxter gate $B$ (\ref{braid matrix of BWM algebra}) is the Bell transform}

\label{subsection Yang Baxter gate}

The Yang--Baxter gate $B$ (\ref{braid matrix of BWM algebra}) acting on the product states gives rise to the Bell states with the local action of
single-qubit gates  modulo a global phase factor,
\eq
\label{B gate as Bell like transform}
B|ij\rangle=e^{i\alpha_B}(R_\phi\otimes R_\phi H)|\psi(ji)\rangle,
\en
where $e^{i\alpha_B}$ is a phase factor depending on the indices $i$ and $j$ with $i,j=0,1$. Interestingly, the inverse of the Yang--Baxter gate $B$, denoted by
$B^\dag$, acting on the product basis,  also generates the Bell basis with the local action of single-qubit gates,
\eq
\label{B inverse gate as Bell like transform}
B^\dag|ij\rangle=e^{i\alpha_{B^\dag}}(R_\phi\otimes R_\phi H)|\psi(j+1,i+1)\rangle,
\en
where the factor $\alpha_{B^\dag}$ is distinctive with $\alpha_B$. Therefore, both the $B$ and $B^\dag$ gates are a generalization of the Bell transform \cite{ZZ14}
with the additional local action of single-qubit gates. For simplicity, we call the Yang--Baxter gate $B$ as the Bell transform in this paper\footnote{
The Bell transform $B_{\textit{ell}}$  in this paper is defined as
 \eq
 \label{def_bell}
 B_{\textit{ell}}=\sum_{k^\prime, l^\prime=0}^1\,e^{i\phi_{\textit{kl}}}(S_{kl}\otimes Q_{kl})|\psi(k,l)\rangle\langle
  k^\prime, l^\prime|,
\en
where $k=k(k^\prime,l^\prime)$ and $l=l(k^\prime,l^\prime)$ are bijective functions of $k^\prime$ and $l^\prime$, respectively;
$e^{i \phi_{\textit{kl}}}$ is the phase factor; and $S_{kl}$ and $Q_{kl}$ are single-qubit gates. Such the definition
of the Bell transform differs from the proposed definition of the Bell transform in previous research \cite{ZZ14} where
single-qubit gates $S_{kl}$ and $Q_{kl}$ are not involved.
}.

Obviously, the Yang--Baxter gate $B$ (or $B^\dag$) is a maximally entangling two-qubit gate \cite{NC2011, Preskill97}, since the Bell states are
maximally entangling two-qubit states and the local action of single-qubit gates does not change the entanglement property of the Bell states.
Any two-qubit gate $U$ \cite{KC02} is locally equivalent to the two-qubit gate $e^{i(aX\otimes X+bY\otimes Y+cZ\otimes Z)}$ modulo local action of single-qubit gates with
three non-local real parameters $(a, b, c)$, and the entangling power \cite{ZZF00} of the two-qubit gate $U$ can be defined as
\eq
e_p(U)=1-\cos^2 2a\cos^2 2b\cos^2 2c-\sin^2 2a\sin^2 2b\sin^2 2c
\en
ranged from 0 to 1, where $e_p(U)=1$ means that the $U$ gate is a maximally entangled two-qubit gate. After some algebra,
the non-local parameters $(a, b, c)$ for the Yang--Baxter gate $B$ take the value of $(\frac \pi 4, \frac \pi 4, 0)$, so $e_p(B)=1$.

In addition, the Yang--Baxter gate $B$ can be decomposed as a tensor product of elementary quantum gates expressed as
\eq
\label{B decomposition}
B=e^{i\frac 3 4\pi}(R_\phi\otimes R_\phi) \textit{CZ} (HZ\otimes HZ) \textit{CZ} (R_\phi^\dag\otimes R_\phi^\dag),
\en
where the \textit{CZ} gate \cite{NC2011, Preskill97} has the conventional form
\eq
\label{definition CZ gate}
\textit{CZ}=|0\rangle\langle0|\otimes 1\!\!1_2+|1\rangle\langle1|\otimes Z.
\en
The quantum circuit corresponding to such a decomposition is illustrated in
\eq
\label{diagram quantum circuit decomposition B}
  \includegraphics[width=6.8cm]{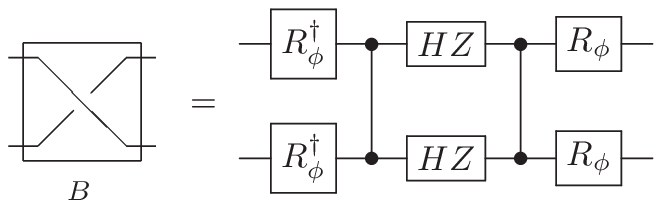}
\en
where the over-crossing feature in the box means that the $B$ gate is a braiding operator \cite{Kauffman02},
and the overall phase $e^{i\frac 3 4\pi}$ is neglected, and the configuration of two solid points connected with a vertical line
represents the \textit{CZ} gate (\ref{definition CZ gate}). Note that the decomposition (\ref{B decomposition}) of the $B$ gate using
at least two \textit{CZ} gates is optimal according to the criterion \cite{SBM04} of calculating the least number of the
\textit{CZ} gate to perform a given two-qubit gate.

\subsection{Extended Temperley--Lieb configurations of the Yang--Baxter gate $B$ (\ref{braid matrix of BWM algebra})}

\label{subsection TL configuration of B}

In accordance with previous research \cite{ZZP15, ZZ14}, a given Yang--Baxter gate is allowed to have various of equivalent extended Temperley--Lieb
configurations. Here£¬ we investigate at least five types of extended Temperley--Lieb configurations of the Yang--Baxter gate $B$ (\ref{braid matrix of BWM algebra}),
three of which will be presented in this subsection and the remaining two of which will be presented in  Appendix \ref{appendix further study TL B}.

The formula (\ref{B gate as Bell like transform}) verifies that the Yang--Baxter gate $B$ (\ref{braid matrix of BWM algebra}) is the Bell transform (\ref{def_bell}) and
the associated Bell transform is expressed as
\eq
B=\sum_{k,l=0}^1e^{i\alpha_B}(R_\phi\otimes R_\phi H)|\psi(lk)\rangle\langle kl|.
\en
With the definition of the Bell basis  (\ref{definition four Bell states})  and the flow (\ref{relation flowing gate Bell state}) of a single-qubit gate on
 the Bell state $|\Psi\rangle$, the Yang--Baxter gate $B$ can be further reformulated as
\eq
\label{B gate expression Bell like transform}
B=\sum_{k,l=0}^1(1\!\!1_2\otimes V_{kl})|\Psi\rangle\langle kl|
\en
with $V_{kl}=e^{i\alpha_B}R_\phi H X^lZ^k R_\phi$. So the first extended Temperley--Lieb configuration of the Yang--Baxter gate $B$ is pictured as
  \eq
  \label{TL B Bell like transform}
  \setlength{\unitlength}{0.5mm}
  \begin{array}{c}
  \begin{picture}(100,40)

  \put(15,-2){\footnotesize{$B$}}

  \put(9,29){\line(0,1){6}}
  \put(9,11){\line(0,-1){6}}
  \put(27,29){\line(0,1){6}}
  \put(27,11){\line(0,-1){6}}

  \put(6,8){\line(0,1){24}}
  \put(30,8){\line(0,1){24}}
  \put(6,8){\line(1,0){24}}
  \put(6,32){\line(1,0){24}}

  \put(9,29){\line(1,-1){18}}
  \put(9,11){\line(1,1){7.3}}
  \put(27,29){\line(-1,-1){7.3}}

  \put(40,19){$=\sum_{k,l=0}^1$}

  \put(84,23){\line(0,1){12}}
  \put(74,23){\line(0,1){12}}
  \put(74,23){\line(1,0){10}}
  \put(84,29){\circle*{2.}}
  \put(86,28){\tiny{$V_{kl}$}}

  \put(74,5){\line(0,1){10}}
  \put(84,5){\line(0,1){10}}
  \put(84,11){\circle*{2.}}
  \put(85,10){\tiny{$X^l$}}
  \put(74,11){\circle*{2.}}
  \put(75,10){\tiny{$X^k$}}

  \put(71.9,15){\tiny{$\triangle$}}
  \put(81.9,15){\tiny{$\triangle$}}

  \end{picture}
  \end{array}
  \en
  where the vertical line with the symbol $\triangle$ represents the state $\langle 0|$ and such the line with the action of the Pauli gate $X$ represents the state $\langle 1|$.
  Note that this configuration is read from the bottom to the top, different from the convention of reading the quantum circuit (\ref{diagram quantum circuit decomposition B})  from
  the left to the right.

  Furthermore, the formula (\ref{B inverse gate as Bell like transform}) verifies that the inverse of the Yang--Baxter gate $B$ is also the Bell transform (\ref{def_bell}) and surprisingly after some algebra  the Yang--Baxter gate $B$ can be related to the inverse of Yang--Baxter gate $B$ with the local action of the single-qubit gate,
  \eq
  \label{B inverse gate expression as Bell like transform}
  B=\sum_{i,j=0}^1 |ij\rangle\langle\Psi|(1\!\!1_2\otimes U_{ij})
  \en
  with $U_{ij}=e^{-i\alpha_{B^\dag}}R_\phi^\dag Z^{i+1}X^{j+1}H R_\phi^\dag$. Thus the second extended Temperley--Lieb configuration of the Yang--Baxter gate $B$ is shown in
  \eq
  \label{TL B inverse Bell like transform}
  \setlength{\unitlength}{0.5mm}
  \begin{array}{c}
  \begin{picture}(100,40)

  \put(15,-2){\footnotesize{$B$}}

  \put(9,29){\line(0,1){6}}
  \put(9,11){\line(0,-1){6}}
  \put(27,29){\line(0,1){6}}
  \put(27,11){\line(0,-1){6}}

  \put(6,8){\line(0,1){24}}
  \put(30,8){\line(0,1){24}}
  \put(6,8){\line(1,0){24}}
  \put(6,32){\line(1,0){24}}

  \put(9,29){\line(1,-1){18}}
  \put(9,11){\line(1,1){7.3}}
  \put(27,29){\line(-1,-1){7.3}}

  \put(40,19){$=\sum_{i,j=0}^1$}

  \put(84,25){\line(0,1){10}}
  \put(74,25){\line(0,1){10}}
  \put(84,29){\circle*{2.}}
  \put(85,28){\tiny{$X^j$}}
  \put(74,29){\circle*{2.}}
  \put(75,28){\tiny{$X^i$}}

  \put(72.2,23){\tiny{$\nabla$}}
  \put(82.2,23){\tiny{$\nabla$}}

  \put(74,5){\line(0,1){12}}
  \put(84,5){\line(0,1){12}}
  \put(74,17){\line(1,0){10}}
  \put(83.9,11){\circle*{2.}}
  \put(85.9,10){\tiny{$U_{ij}$}}

  \end{picture}
  \end{array}
  \en
  which appears quite different from the configuration (\ref{TL B Bell like transform}), although they describe the same Yang--Baxter gate $B$ (\ref{braid matrix of BWM algebra}).

  Moreover, the third extended Temperley--Lieb configuration of the Yang--Baxter gate $B$ is due to the application of the spectral theorem \cite{NC2011}, that is,
  the Yang--Baxter gate $B$ has the decomposition
  \eq
  \label{relation B spectral theorem}
  B=\sum_{i,j=0}^1 \lambda_{ij}E_{ij},
  \en
  where the Temperley--Lieb projectors $E_{ij}$ (\ref{definition E ij projector}) are eigenstates of the Yang--Baxter gate $B$ with the respective eigenvalues
  $\lambda_{00}=e^{i5\pi/4}$, $\lambda_{01}=\lambda_{10}=e^{i3\pi/4}$ and $\lambda_{11}=e^{i\pi/4}$. The associated extended Temperley--Lieb configuration of
  the Yang--Baxter gate $B$ is shown as
  \eq
  \label{TL B spectral theorem}
\setlength{\unitlength}{0.5mm}
\begin{array}{c}
\begin{picture}(100,40)

\put(15,-2){\footnotesize{$B$}}

  \put(9,29){\line(0,1){6}}
  \put(9,11){\line(0,-1){6}}
  \put(27,29){\line(0,1){6}}
  \put(27,11){\line(0,-1){6}}

  \put(6,8){\line(0,1){24}}
  \put(30,8){\line(0,1){24}}
  \put(6,8){\line(1,0){24}}
  \put(6,32){\line(1,0){24}}

  \put(9,29){\line(1,-1){18}}
  \put(9,11){\line(1,1){7.3}}
  \put(27,29){\line(-1,-1){7.3}}

  \put(36,19){$=\sum_{i,j=0}^1\lambda_{ij}$}

  \put(88,23){\line(0,1){12}}
  \put(78,23){\line(0,1){12}}
  \put(78,23){\line(1,0){10}}
  \put(88,29){\circle*{2.}}
  \put(90,28){\tiny{$M_{ij}$}}

  \put(78,5){\line(0,1){12}}
  \put(88,5){\line(0,1){12}}
  \put(78,17){\line(1,0){10}}
  \put(87.9,11){\circle*{2.}}
  \put(89.9,10){\tiny{$M^\dag_{ij}$}}

\end{picture}
\end{array}
\en
with single-qubit gates $M_{ij}$ defined in (\ref{definition M ij gates}).

 Although three configurations (\ref{TL B Bell like transform}), (\ref{TL B inverse Bell like transform}) and (\ref{TL B spectral theorem}) of  the Yang--Baxter gate $B$  are equivalent, when and how they will be exploited completely depend on a specific circumstance; for example, the first and second configurations are immediately applied in the
 following subsection and the third configuration will be used in Section~\ref{section interpretation BMW teleportation}.

\subsection{Quantum teleportation using the Yang--Baxter gate $B$ (\ref{braid matrix of BWM algebra})}

The quantum teleportation \cite{BBCCJPW93, Vaidman94, BDM00, Werner01} can be characterized by the teleportation operator \cite{ZZP15,ZZ14}, which is the tensor product  in terms of the identity operator, the Bell transform and its inverse. Here the  Yang--Baxter gate $B$ is the Bell transform (\ref{def_bell}), see (\ref{B gate as Bell like transform}) and
(\ref{B gate expression Bell like transform}), and especially the inverse of the Yang--Baxter gate $B$ is related to the Yang--Baxter gate $B$ by the local
action of single-qubit gates, see (\ref{B inverse gate as Bell like transform}) and (\ref{B inverse gate expression as Bell like transform}). Hence we define the teleportation operator as the tensor product in terms of the identity operator and the  Yang--Baxter gate $B$, namely $(B\otimes 1\!\!1_2)(1\!\!1_2\otimes B)$ for simplicity (instead of $(B^{-1}\otimes 1\!\!1_2)(1\!\!1_2\otimes B)$ orginally introduced in \cite{ZZP15,ZZ14}).

With two types of the extended Temperley--Lieb configurations (\ref{TL B Bell like transform}) and (\ref{TL B inverse Bell like transform}) of the Yang--Baxter gate $B$,
the extended  Temperley--Lieb  configuration of the teleportation operator $(B\otimes 1\!\!1_2)(1\!\!1_2\otimes B)$ is illustrated in
  \eq
  \label{TL teleportation operator}
  \setlength{\unitlength}{0.5mm}
  \begin{array}{c}
  \begin{picture}(30,70)

%right down braid------------------------------------------------------------------------------

  \put(-66,25){\line(0,1){6}}
  \put(-66,7){\line(0,-1){6}}
  \put(-48,25){\line(0,1){6}}
  \put(-48,7){\line(0,-1){6}}

  \put(-69,4){\line(0,1){24}}
  \put(-45,4){\line(0,1){24}}
  \put(-69,4){\line(1,0){24}}
  \put(-69,28){\line(1,0){24}}

  \put(-66,25){\line(1,-1){18}}
  \put(-66,7){\line(1,1){7.3}}
  \put(-48,25){\line(-1,-1){7.3}}

%left up braid------------------------------------------------------------------------------

  \put(-84,55){\line(0,1){6}}
  \put(-84,37){\line(0,-1){6}}
  \put(-66,55){\line(0,1){6}}
  \put(-66,37){\line(0,-1){6}}

  \put(-87,34){\line(0,1){24}}
  \put(-63,34){\line(0,1){24}}
  \put(-87,34){\line(1,0){24}}
  \put(-87,58){\line(1,0){24}}

  \put(-84,55){\line(1,-1){18}}
  \put(-84,37){\line(1,1){7.3}}
  \put(-66,55){\line(-1,-1){7.3}}

%------------------------------------------------------------------------------

  \put(-48,31){\line(0,1){30}}
  \put(-84,31){\line(0,-1){30}}

  %\put(-90,-7){\footnotesize$(B\otimes 1\!\!1_2)(1\!\!1_2\otimes B)$}

  \put(-40,29){$=$}

%------------------------------------------------------------------------------

  \put(-30,29){$\sum_{i,j,k,l=0}^1$}

  \put(14.5,48){\line(0,1){12}}
  \put(2,48){\line(0,1){12}}
  %\put(2,36){\line(1,0){10}}
  \put(14.5,54){\circle*{2.}}
  \put(15.5,53){\tiny{$X^j$}}
  \put(2,54){\circle*{2.}}
  \put(3,53){\tiny{$X^i$}}

  \put(-.2,45){{\scriptsize$\nabla$}}
  \put(12.3,45){\scriptsize{$\nabla$}}

  \put(14.5,30){\circle*{2.}}
  \put(16,29){\tiny{$U_{ij}$}}

  \put(14.5,0){\line(0,1){12}}
  \put(27,0){\line(0,1){12}}
  %\put(12,12){\line(1,0){10}}
  \put(27,6){\circle*{2.}}
  \put(28,5){\tiny{$X^l$}}
  \put(14.5,6){\circle*{2.}}
  \put(15.5,5){\tiny{$X^k$}}

  \put(12.3,12){\tiny{$\triangle$}}
  \put(24.8,12){\tiny{$\triangle$}}

  \put(2,0){\line(0,1){37.5}}
  \put(2,37.5){\line(1,0){12.5}}
  \put(14.5,22.5){\line(0,1){15}}
  \put(27,22.5){\line(0,1){37.5}}
  \put(14.5,22.5){\line(1,0){12.5}}

  \put(27,30){\circle*{2.}}
  \put(28.5,29){\tiny{$V_{kl}$}}

  \put(40,29){$=\sum_{i,j,k,l=0}^1$}

  \put(94.5,48){\line(0,1){12}}
  \put(82,48){\line(0,1){12}}
  %\put(2,36){\line(1,0){10}}
  \put(94.5,54){\circle*{2.}}
  \put(95.5,53){\tiny{$X^j$}}
  \put(82,54){\circle*{2.}}
  \put(83,53){\tiny{$X^i$}}

  \put(79.8,45){{\scriptsize$\nabla$}}
  \put(92.3,45){\scriptsize{$\nabla$}}

  %\put(94.5,30){\circle*{2.}}
  %\put(96,29){\tiny{$U_{ij}$}}

  \put(94.5,0){\line(0,1){12}}
  \put(107,0){\line(0,1){12}}
  %\put(12,12){\line(1,0){10}}
  \put(107,6){\circle*{2.}}
  \put(108,5){\tiny{$X^l$}}
  \put(94.5,6){\circle*{2.}}
  \put(95.5,5){\tiny{$X^k$}}

  \put(92.3,12){\tiny{$\triangle$}}
  \put(104.8,12){\tiny{$\triangle$}}

  \put(82,0){\line(0,1){37.5}}
  \put(82,37.5){\line(1,0){12.5}}
  \put(94.5,22.5){\line(0,1){15}}
  \put(107,22.5){\line(0,1){37.5}}
  \put(94.5,22.5){\line(1,0){12.5}}

  \put(107,54){\circle*{2.}}
  %\put(108.5,53){\tiny{$W_{i,j,k,l}$}}
 \put(108.5,53){\tiny{$W_{i,j,k,l}$}}

  \end{picture}
  \end{array}
  \en
  where the single-qubit gate $U_{ij}$ (\ref{B inverse gate expression as Bell like transform}) flows from Alice's system to Bob's system  with the transpose and the single-qubit gate $W_{i,j,k,l}$ has the form
  \eq
  W_{i,j,k,l}=V_{kl}U^T_{ij}=(-1)^{l\cdot(k+j+1)}e^{i(\alpha_B-\alpha_{B^\dag})}R_\phi X^{j+k+1}Z^{i+l+1}R_\phi^\dag.
  \en
 In view of this configuration, it is obvious that an unknown qubit $|\alpha\rangle$ can be transmitted from Alice to Bob by topological deformation with the additional action
 of the single-qubit gate $W_{i,j,k,l}$. Consider the action of the teleportation operator $(B\otimes 1\!\!1_2)(1\!\!1_2\otimes B)$ on the initial state $|\alpha\rangle|kl\rangle$. With the help of the configuration (\ref{TL teleportation operator}),  the corresponding teleportation equation is expressed as
  \eq
  \label{teleportation equation B}
  (B\otimes 1\!\!1_2)(1\!\!1_2\otimes B)|\alpha\rangle|kl\rangle=\frac 1 2\sum_{i,j=0}^1|ij\rangle\otimes W_{i,j,k,l}|\alpha\rangle,
  \en
  where the factor $1/2$ is due to the vanishing of a pair of cup and cap configurations. To complete the teleportation protocol, Alice performs the product basis measurement
  $|ij\rangle\langle ij|$ on her qubits and then informs the measurement results $(i,j)$ to Bob. Afterwards, Bob applies the unitary correction gate $W^\dag_{i,j,k,l}$ on his qubit
   to obtain the transmitted state $|\alpha\rangle$.

   Two remarks are made. First, the fact that the teleportation operator $(B\otimes 1\!\!1_2)(1\!\!1_2\otimes B)$ successfully describes quantum teleportation means that the definition of the Bell transform (\ref{def_bell}) with additional local action of single-qubit gates is an appropriate generalization of the definition of the Bell transform in \cite{ZZ14}. Second, the notation $W_{i,j,k,l}$ (\ref{teleportation equation B}) is not the notation $W_{ij}$ in (\ref{definition four Bell states}).
   The single-qubit gate $W_{i,j,k,l}$ in the teleportation equation (\ref{teleportation equation B}) is in general not the Pauli gate because single-qubit gates $V_{kl}$ (\ref{B gate expression Bell like transform}) and $U_{ij}$ (\ref{B inverse gate expression as Bell like transform}) are usually not the Pauli gates, whereas the corresponding single-qubit gate in the teleportation equation derived in  \cite{ZZP15, ZZ14}  is always the Pauli gate. In this sense, quantum teleportation using the Yang--Baxter gate $B$ (\ref{braid matrix of BWM algebra}) is beyond the standard quantum teleportation in \cite{BBCCJPW93, Vaidman94, BDM00, Werner01} and  \cite{ZZP15, ZZ14}.

\section{Teleportation-based quantum computation using the Yang--Baxter gate}

\label{section teleportation quantum computation YBG}

In fault-tolerant quantum computation  \cite{NC2011, Preskill97}, Clifford gates \cite{NC2011, Gottesman97} such as the Hadamard gate and the phase
gate (\ref{definition H S R gates}) and the \textit{CZ} gate (\ref{definition CZ gate}), can be fault-tolerantly performed
in a systematic approach, but how a non-Clifford gate such as a specific phase shift gate $R_\phi$ (\ref{definition H S R gates}) with $\phi=\pi/8$
(called the $\pi/8$ gate or the $T$ gate, $T=R_{\pi/8}$), can be fault-tolerantly performed is not that explicit. Teleportation-based quantum
computation \cite{GC99, Nielsen03} is a type of fault-tolerant quantum computation in which the problem of how to fault-tolerantly
perform non-Clifford gates becomes the problem of how to fault-tolerantly perform Clifford gates with the aid of quantum teleportation.

In this section, first of all, we verify the Yang--Baxter gate $B_0$ (\ref{definition B 0 gate}) as a Clifford gate, which is the Yang--Baxter
gate $B$ (\ref{braid matrix of BWM algebra}) with the specific phase parameter $\phi=0$, and derive the teleportation equations using such the gate $B_0$.
Secondly, we study the fault-tolerant construction of single-qubit gates and two-qubit gates in teleportation-based quantum computation
using the Yang--Baxter  gate $B_0$ (\ref{definition B 0 gate}).

\subsection{The Yang--Baxter gate $B_0$ (\ref{definition B 0 gate}) is a Clifford gate}

A Clifford gate \cite{NC2011, Gottesman97} is expressed as a tensor product of the Hadamard gate $H$ and the phase gate $S$ (\ref{definition H S R gates})
and the \textit{CZ} gate (\ref{definition CZ gate}); equivalently, tensor products of the Pauli matrices with phase factors $\pm1$, $\pm i$
are preserved under conjugation of Clifford gates. Hence the Pauli gates with overall phases $\pm1$, $\pm i$ are the simplest examples for Clifford gates.
Note that Clifford gates play crucial roles in fault-tolerant quantum computation \cite{NC2011, Gottesman97}.

The Yang--Baxter gate $B$ (\ref{braid matrix of BWM algebra}) is in general not a Clifford gate  due to the phase shift gate $R_\phi$
(\ref{definition H S R gates}) in its decomposition formula (\ref{B decomposition}). When the phase factor is zero,  namely $\phi=0$, however,
the phase shift gate $R_\phi$ becomes the identity matrix and the associated Yang--Baxter gate denoted as $B_0$ given by
\eq
\label{definition B 0 gate}
B_0=\textit{CZ}(HZ\otimes HZ)\textit{CZ}
\en
with $Z=S^2$ modulo the overall phase $e^{i\frac 3 4\pi}$ is obviously a Clifford gate. The transformation properties of tensor products of the Pauli
matrices under conjugation by the Yang--Baxter gate $B_0$ shown below
\eq
\label{Clifford gate properties of B 0}
\begin{array}{ll} B_0 (X\otimes 1\!\!1_2) B_0^\dag = -(1\!\!1_2\otimes X), & B_0 (1\!\!1_2\otimes X) B_0^\dag =-(X\otimes 1\!\!1_2), \\
B_0 (Z\otimes 1\!\!1_2) B_0^\dag = (X\otimes Z), & B_0 (1\!\!1_2\otimes Z) B_0^\dag =(Z\otimes X), \end{array}
\en
also verify the Yang--Baxter gate $B_0$ as a Clifford gate.

The Yang--Baxter gate $B$ (\ref{braid matrix of BWM algebra}) and its inverse $B^\dag$ are the Bell transform (\ref{def_bell}), and thus the Yang--Baxter
gate $B_0$ (\ref{definition B 0 gate}) and its inverse $B_0^\dag$ are too. The $B_0$ gate acting on the product basis gives rise to the Bell basis with the
local action of the Hadamard gate,
\eq
B_0|ij\rangle=(-1)^{i+j+i\cdot j}(1\!\!1_2\otimes H)|\psi(ji)\rangle,
\en
which is the special case of (\ref{B gate as Bell like transform}) with $\phi=0$, and the same is true for the gate $B_0^\dag$,
\eq
B^\dag_0|ij\rangle=(-1)^{i\cdot j}(1\!\!1_2\otimes H)|\psi(j+1,i+1)\rangle,
\en
which is the special example of (\ref{B inverse gate as Bell like transform}) with $\phi=0$.

Consider the special case of the teleportation equation (\ref{teleportation equation B}) at  $\phi=0$. The associated teleportation operator
$(B_0\otimes 1\!\!1_2)(1\!\!1_2\otimes B_0)$ gives rise to the teleportation equation
\eq
\label{teleportation equation B 0}
(B_0\otimes 1\!\!1_2)(1\!\!1_2\otimes B_0)|\alpha\rangle|kl\rangle=\frac 1 2\sum_{i,j=0}^1|ij\rangle \otimes K_{i,j,k,l}|\alpha\rangle
\en
where single-qubit gates $K_{i,j,k,l}=(-1)^{j\cdot l+i\cdot j+k}X^{j+k+1}Z^{i+l+1}$ are the Pauli gates with phase factors. Furthermore, the teleportation operator
can be defined in the other way, namely  $(1\!\!1_2\otimes B_0)(B_0\otimes 1\!\!1_2)$ (refer to \cite{ZZP15}), and it is related to the teleportation equation
\eq
\label{teleportation equation B 0 inverse direction}
(1\!\!1_2\otimes B_0)(B_0\otimes 1\!\!1_2)|kl\rangle|\alpha\rangle=\frac 1 2\sum_{i,j=0}^1 L_{i,j,k,l}|\alpha\rangle\otimes|ij\rangle,
\en
where $L_{i,j,k,l}=(-1)^{i\cdot k+i\cdot j+l}X^{i+l+1}Z^{j+k+1}$. Both types of the teleportation equations using the Yang--Baxter gate $B_0$ are to be exploited in
the fault-tolerant construction of quantum gates in teleportation-based quantum computation in the following subsection.

\subsection{Fault-tolerant construction of a universal quantum gate set}

The detailed strategy of fault-tolerantly constructing single-qubit gates and two-qubit gates in teleportation-based quantum computation has been presented
in \cite{ZZP15}, and hence  we will make a brief sketch on the relevant  results in this paper.

In quantum circuit model \cite{NC2011}, single-qubit gates with an entangling two-qubit gate form a universal quantum gate set, namely all quantum gates
can be generated as a tensor product of gates from this gate set. All single-qubit gates can be generated from the Hadamard gate $H$ and the $T$ gate 
($T=R_{\pi/8}$) \cite{BMPRV00}. The Yang--Baxter  gate $B_0$ (\ref{definition B 0 gate}) is the Bell transform and so it is a maximal entangling two-qubit gate.
Therefore, the quantum gate set including the single-qubit gates $H$, $T$ and the  Yang--Baxter  gate $B_0$ is a universal quantum gate set.

To fault-tolerantly perform a single-qubit gate $U$ on an unknown qubit state $|\alpha\rangle$, namely $U|\alpha\rangle$,  we prepare the product 
state $|\alpha\rangle|kl\rangle$, then firstly apply $(1\!\!1_2\otimes B_0)$  and secondly apply $(1\!\!1_2\otimes 1\!\!1_2\otimes U)$, so that
the prepared state has the form,
\eq
(1\!\!1_2\otimes 1\!\!1_2\otimes U)(1\!\!1_2\otimes B_0)|\alpha\rangle|kl\rangle,
\en
where the single-qubit gate $U$ is acting on the entangled state $B_0|kl\rangle$. Thirdly, we apply $(B_0\otimes 1\!\!1_2)$ on such the prepared state 
in order to derive the teleportation equation given by
\eq
\label{teleportation equation single qubit gate}
(B_0\otimes 1\!\!1_2)(1\!\!1_2\otimes 1\!\!1_2\otimes U)(1\!\!1_2\otimes B_0)|\alpha\rangle|kl\rangle=\frac 1 2\sum_{i,j=0}^1|ij\rangle \otimes R_{i,j,k,l}U|\alpha\rangle,
\en
where $R_{i,j,k,l}=UK_{i,j,k,l}U^\dag$. Finally, we apply the local unitary correction operator $(1\!\!1_2\otimes 1\!\!1_2\otimes R^\dag_{i,j,k,l})$ to obtain the expected 
result $U|\alpha\rangle$. When the single-qubit gate $U$ is a Clifford gate such as the Hadamard gate $H$, the related single-qubit gates $R(H)_{i,j,k,l}$ have the form
\eq
R(H)_{i,j,k,l}=HK_{i,j,k,l}H^\dag=(-1)^{j\cdot l+i\cdot j+k}Z^{j+k+1}X^{i+l+1},
\en
which are Pauli gates with overall phases. When $U$ is a non-Clifford gate such as the $T$ gate,
the single-qubit gates $R(T)_{i,j,k,l}$ have the form
\eq
  R(T)_{i,j,k,l}=TK_{i,j,k,l}T^\dag=(-1)^{j\cdot l+i\cdot j+k}\left(\frac{X-\sqrt{-1}Y}{\sqrt 2}\right)^{j+k+1}Z^{i+l+1},
\en
with $\sqrt{-1}$ denoting the imaginary unit,  which are the Clifford gates. Therefore, the task of fault-tolerantly performing the non-Clifford gate $T$ in
teleportation-based quantum computation has changed as the task of fault-tolerantly preparing the initial state $(1\!\!1_2\otimes 1\!\!1_2\otimes T)(1\!\!1_2\otimes B_0)|\alpha\rangle|kl\rangle$ and  fault-tolerantly performing the Clifford gate $R(T)_{i,j,k,l}$.

To fault-tolerantly perform the Yang--Baxter gate $B_0$ on an unknown two-qubit state $|\alpha\beta\rangle$, namely $B_0|\alpha\beta\rangle$,
we fault-tolerantly prepare the initial state
\eq
(1\!\!1_2\otimes B_0\otimes 1\!\!1_2)(B_0\otimes B_0)(|k_1l_1\rangle\otimes|k_2l_2\rangle),
\en
and then perform a three-fold tensor product of the Yang--Baxter gate $B_0$ on such the prepared state, so that we have the teleportation equation
\eqa
&&(B_0\otimes B_0\otimes B_0)(1\!\!1_2\otimes B_0\otimes B_0\otimes 1\!\!1_2)(|\alpha\rangle\otimes|k_1l_1\rangle)\otimes(|k_2l_2\rangle\otimes|\beta\rangle) \nonumber \\
&=& \frac 1 4 \sum_{i_1,j_1=0}^1 \sum_{i_2,j_2=0}^1 (1\!\!1_4\otimes Q\otimes P\otimes 1\!\!1_4)(|i_1j_1\rangle\otimes B_0|\alpha\beta\rangle\otimes|i_2j_2\rangle),
\ena
where the tensor product of the single-qubit gates $Q$ and $P$ is defined as
\eq
Q\otimes P =B_0(K_{i_1,j_1,k_1,l_1}\otimes L_{i_2,j_2,k_2,l_2})B_0^\dag,
\en
 with the single-qubit gates $K_{i_1,j_1,k_1,l_1}$ and $L_{i_2,j_2,k_2,l_2}$ defined in the teleportation equations (\ref{teleportation equation B 0}) and (\ref{teleportation equation B 0 inverse direction}), respectively. After some algebra, the single-qubit gates $Q$ and $P$ are found to be the Pauli gates with phase factors,
\eqa
Q &=& (-1)^{(k_1+1)\cdot(i_1+l_1+1)+1}X^{i_1+i_2+l_1+l_2}Z^{j_2+k_2+1}; \\
P &=& (-1)^{i_2\cdot (k_2+j_2+1)+1}Z^{i_1+l_1+1}X^{j_1+k_1+j_2+k_2},
\ena
where the relations  (\ref{Clifford gate properties of B 0}) have been exploited.

\section{Interpretation of the tangle relations (\ref{definition BMW algebra_tangle}) of the BMW algebra via quantum teleportation}

\label{section interpretation BMW teleportation}

In the previous sections, we have reformulated the standard description of quantum teleportation using the Temperley--Lieb projector
and described both quantum teleportation protocol and teleportation-based quantum computation using the Yang--Baxter gate. On the other hand,
the mixed relations of the BMW algebra involving both the Temperley--Lieb projector and the Yang--Baxter gate, such as the tangle relations
 (\ref{definition BMW algebra_tangle}), have not been considered so far. In this section, therefore, we clearly show that the tangle relations
  of the BMW algebra have a natural interpretation of quantum teleportation.

First of all, from the tangle relations (\ref{definition BMW algebra_tangle}), we derive the constraint equations of single-qubit gates $M_{ij}$
(\ref{definition M ij gates}) which define both the  Temperley--Lieb projector $E_{ij}$ (\ref{definition E ij projector}) and the Yang--Baxter gate
$B$ (\ref{relation B spectral theorem}). For convenience, we make a theorem to summarize our research results and present a detailed proof for it.
Secondly, throughout the proof, we realize that the tangle relations of the BMW algebra are associated with the teleportation equations characterizing
quantum teleportation, which is summarized in the corollary to the theorem.

\begin{theorem}
\label{theorem1}
In terms of the Temperley--Lieb matrix $E$ (\ref{TL matrix of BMW algebra}) and the Yang--Baxter gate $B$ (\ref{braid matrix of BWM algebra}), the
four matrix expressions of the tangle relations (\ref{definition BMW algebra_tangle}) of the BMW algebra given by
\eqa
\label{definition four matrix tangle relation 1}
(B\otimes 1\!\!1_2)(1\!\!1_2\otimes B)(E\otimes 1\!\!1_2) &=& 2(1\!\!1_2\otimes E)(E\otimes 1\!\!1_2); \\
\label{definition four matrix tangle relation 2}
(1\!\!1_2\otimes B)(B\otimes 1\!\!1_2)(1\!\!1_2\otimes E) &=& 2(E\otimes 1\!\!1_2)(1\!\!1_2\otimes E); \\
\label{definition four matrix tangle relation 3}
(E\otimes 1\!\!1_2)(1\!\!1_2\otimes B)(B\otimes 1\!\!1_2) &=& 2(E\otimes 1\!\!1_2)(1\!\!1_2\otimes E); \\
\label{definition four matrix tangle relation 4}
(1\!\!1_2\otimes E)(B\otimes 1\!\!1_2)(1\!\!1_2\otimes B) &=& 2(1\!\!1_2\otimes E)(E\otimes 1\!\!1_2),
\ena
are respectively reduced to the constraint equations of  single-qubit gates $M_{ij}$ (\ref{definition M ij gates}) given by
   \eqa
   \label{relation tangle required relation 1}
   \sum_{k,l=0}^1\lambda_{ij}\lambda_{kl}M_{kl}M^*_{ij}M^T_{00}M^\dag_{kl} &=& 2M_{00}M^*_{ij}; \\
   \label{relation tangle required relation 2}
   \sum_{k,l=0}^1\lambda_{ij}\lambda_{kl}M^T_{kl}M^\dag_{ij}M_{00}M^*_{kl} &=& 2M^T_{00}M^\dag_{ij}; \\
   \label{relation tangle required relation 3}
   \sum_{k,l=0}^1\lambda_{ij}\lambda_{kl}M_{kl}M^*_{00}M^T_{ij}M^\dag_{kl} &=& 2M^T_{ij}M^\dag_{00}; \\
   \label{relation tangle required relation 4}
   \sum_{k,l=0}^1\lambda_{ij}\lambda_{kl}M^T_{kl}M^\dag_{00}M_{ij}M^*_{kl} &=& 2M_{ij}M^*_{00},
   \ena
where the parameters $\lambda_{ij}$ are the eigenvalues of the Yang--Baxter gate $B$ (\ref{relation B spectral theorem})
and single-qubit gates $M_{ij}$ (\ref{definition M ij gates}) are bases of unitary matrices defining the  Temperley--Lieb
projector $E_{ij}$ (\ref{definition E ij projector}).
\end{theorem}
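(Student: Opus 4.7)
The plan is to reduce each matrix tangle relation to its stated single-qubit constraint by (i) inserting the spectral decomposition $B=\sum_{i,j}\lambda_{ij}E_{ij}$ from (\ref{relation B spectral theorem}) wherever $B$ appears on the left-hand side, (ii) evaluating both sides on a maximally entangled three-qubit test vector for which the teleportation identities of Section~\ref{section teleportation TL} apply, and (iii) matching coefficients in the resulting expansion in the orthonormal Bell-like basis $\{|\Psi_{M_{kl}}\rangle\}$.

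Concretely, for relation (\ref{definition four matrix tangle relation 1}) I would apply both sides to $|\Psi_{M_{00}}\rangle\otimes|\alpha\rangle$. The outer factor $(E\otimes 1\!\!1_2)$ acts as the identity on this vector because $E=E_{00}$ is the projector onto $|\Psi_{M_{00}}\rangle$. The remaining operator $(B\otimes 1\!\!1_2)(1\!\!1_2\otimes B)$ is expanded via the spectral decomposition, and each action of a projector $E_{ij}$ on an adjacent Bell-like state is evaluated by the half-teleportation identities (\ref{teleportation equation half}) and (\ref{teleportation equation M 00 transpose}) (together with the obvious generalization in which $M_{00}$ is replaced by $M_{ij}$). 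Doing the two successive projector actions yields
\[
\mathrm{LHS}=\frac{1}{4}\sum_{i,j,k,l=0}^{1}\lambda_{ij}\lambda_{kl}\,|\Psi_{M_{kl}}\rangle\otimes M_{ij}M^{*}_{kl}M^{T}_{00}M^{\dag}_{ij}|\alpha\rangle .
\]
On the right-hand side, the teleportation-as-projector identities (\ref{teleportation equation E 00}) and (\ref{teleportation equation E 00 transpose}) collapse $2(1\!\!1_2\otimes E)(E\otimes 1\!\!1_2)$ on the same test vector to $|\alpha\rangle\otimes|\Psi_{M_{00}}\rangle$, which by the teleportation equation (\ref{teleportation equation M 00}) re-expands as $\frac{1}{2}\sum_{k,l}|\Psi_{M_{kl}}\rangle\otimes M_{00}M^{*}_{kl}|\alpha\rangle$. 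Equating coefficients of $|\Psi_{M_{kl}}\rangle$, cancelling the overall factor $\lambda_{kl}/4$, and relabelling $(i,j)\leftrightarrow(k,l)$ gives exactly (\ref{relation tangle required relation 1}).

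The remaining three tangle relations are handled in a parallel fashion, with only a change of test vector and of which teleportation identity plays which role. For (\ref{definition four matrix tangle relation 2}) and (\ref{definition four matrix tangle relation 4}) I would use $|\alpha\rangle\otimes|\Psi_{M_{00}}\rangle$, and for (\ref{definition four matrix tangle relation 3}) I would use $|\Psi_{M_{00}}\rangle\otimes|\alpha\rangle$; in each case the outermost $E$ is absorbed into $|\Psi_{M_{00}}\rangle$, the nested $B$'s are expanded spectrally and reduced by (generalized) half-teleportation, and the right-hand side is recognized as a bare teleportation channel and re-expanded via (\ref{teleportation equation M 00}) or (\ref{teleportation equation M 00 transpose}). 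This produces (\ref{relation tangle required relation 2})--(\ref{relation tangle required relation 4}) in turn.

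The main obstacle is purely bookkeeping: the half-teleportation identities produce single-qubit factors in the four distinct combinations $M_{00}M^{*}_{ij}$, $M^{T}_{00}M^{\dag}_{ij}$, $M_{ij}M^{*}_{kl}$, $M^{T}_{ij}M^{\dag}_{kl}$ depending on which side of the product $E$ sits and on the orientation of the Bell-like state, so one must be careful not to conflate a transpose with a complex conjugate when chaining two such identities — precisely the asymmetry already highlighted after (\ref{teleportation equation M 00 transpose}). It is this asymmetry that forces four separate constraint equations out of the four tangle relations, rather than a single master identity.
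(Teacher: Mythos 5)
Your treatment of relations (\ref{definition four matrix tangle relation 1}) and (\ref{definition four matrix tangle relation 2}) is correct and is essentially the paper's own argument: same test vector $|\Psi_{M_{00}}\rangle\otimes|\alpha\rangle$ (resp.\ $|\alpha\rangle\otimes|\Psi_{M_{00}}\rangle$), same spectral expansion of $B$, same chaining of the half-teleportation identities, and coefficient matching in the orthonormal basis $\{|\Psi_{M_{ij}}\rangle\}$; your displayed LHS agrees with (\ref{relation tangle relation right side 1}) after the relabelling you note.

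There is, however, a genuine gap in your handling of relations (\ref{definition four matrix tangle relation 3}) and (\ref{definition four matrix tangle relation 4}). In these two relations the operator $E$ sits \emph{leftmost}, so when you feed a ket in from the right it acts \emph{last}; it is not absorbed into the input state $|\Psi_{M_{00}}\rangle$ as you claim. Worse, what actually happens with your choice is fatal to the coefficient-matching step. Take (\ref{definition four matrix tangle relation 3}) on $|\Psi_{M_{00}}\rangle\otimes|\alpha\rangle$: the rightmost factor $(B\otimes 1\!\!1_2)$ merely multiplies the input by its eigenvalue $\lambda_{00}$, and the final $(E\otimes 1\!\!1_2)$ projects the whole output onto $|\Psi_{M_{00}}\rangle\otimes(\cdot)$. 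The would-be free index is thereby summed out, and equating the two sides yields only
\begin{equation*}
\sum_{k,l=0}^1\lambda_{00}\lambda_{kl}\,M_{kl}M^*_{00}M^T_{00}M^\dag_{kl}=2\,M^T_{00}M^\dag_{00},
\end{equation*}
i.e.\ the single instance $(i,j)=(0,0)$ of (\ref{relation tangle required relation 3}), not the full family; the same collapse occurs for (\ref{definition four matrix tangle relation 4}). The paper avoids this by dualizing: it applies the \emph{bras} $\langle\Psi_{M_{00}}|\otimes\langle\alpha|$ and $\langle\alpha|\otimes\langle\Psi_{M_{00}}|$ from the left (see (\ref{example tangle relation 3})--(\ref{example tangle relation 4})), so that the outermost $E$ really is absorbed into the test covector and the free index survives from the spectral expansion of the adjacent $B$. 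Equivalently, you could repair your version by ranging over the full family of test kets $|\Psi_{M_{ij}}\rangle\otimes|\alpha\rangle$ (resp.\ $|\alpha\rangle\otimes|\Psi_{M_{ij}}\rangle$) for all $i,j$, which spans the three-qubit space and does reproduce (\ref{relation tangle required relation 3}) and (\ref{relation tangle required relation 4}) with $(i,j)$ free; as written, with the single test vector, your argument proves strictly less than the theorem states.
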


\begin{proof}
As an example, let us derive the first constraint equation (\ref{relation tangle required relation 1})  of single-qubit gates  $M_{ij}$ from the first tangle relation
(\ref{definition four matrix tangle relation 1}). We apply both sides of the equation (\ref{definition four matrix tangle relation 1}) on $|\Psi_{M_{00}}\rangle\otimes |\alpha\rangle$
with the Bell-like state $|\Psi_{M_{00}}\rangle$ (\ref{definition M 00 state}) and an unknown qubit $|\alpha\rangle$, so that we have
\eq
\label{example tangle relation}
(B\otimes 1\!\!1_2)(1\!\!1_2\otimes B)\left(|\Psi_{M_{00}}\rangle\otimes |\alpha\rangle\right)= 2(1\!\!1_2\otimes E)(|\Psi_{M_{00}}\rangle\otimes |\alpha\rangle),
\en
where $E=|\Psi_{M_{00}}\rangle\langle\Psi_{M_{00}}|$ has been exploited.

With the teleportation equation (\ref{teleportation equation E 00 transpose}), the right-hand side of the equation (\ref{example tangle relation}) is equal to $|\alpha\rangle\otimes|\Psi_{M_{00}}\rangle$, which can be further reformulated  with the teleportation equation (\ref{teleportation equation M 00}), so that
we have
\eq
\label{relation tangle relation left side 1}
2(1\!\!1_2\otimes E)(|\Psi_{M_{00}}\rangle\otimes |\alpha\rangle)=\frac 1 2 \sum_{i,j=0}^1 |\Psi_{M_{ij}}\rangle\otimes M_{00}M^*_{ij}|\alpha\rangle.
\en
With the spectral decomposition of the Yang--Baxter gate $B$ (\ref{relation B spectral theorem}),  the left-hand side of the equation (\ref{example tangle relation})
can be calculated as follows,
\eqa
  \label{relation tangle relation right side 1}
  &&(B\otimes 1\!\!1_2)(1\!\!1_2\otimes B)(|\Psi_{M_{00}}\rangle\otimes |\alpha\rangle) \nonumber \\
   &=& \sum_{i,j=0}^1\sum_{k,l=0}^1\lambda_{ij}\lambda_{kl}\left((M_{ij},M^\dag_{ij})\otimes 1\!\!1_2\right)\left(1\!\!1_2\otimes (M_{kl},M^\dag_{kl})\right)\left(|\Psi_{M_{00}}\rangle\otimes |\alpha\rangle\right) \nonumber\\
   &=& \frac 1 4 \sum_{i,j=0}^1\sum_{k,l=0}^1 \lambda_{ij}\lambda_{kl}|\Psi_{M_{ij}}\rangle\otimes M_{kl}M^*_{ij}M^T_{00}M^\dag_{kl}|\alpha\rangle,
  \ena
where  the symbol $(M_{ij},M^\dag_{ij})$ denotes the Bell-like projective measurement operator,
\eq
  \label{definition notation projector with pair}
  (M_{ij},M^\dag_{ij})\equiv|\Psi_{M_{ij}}\rangle\langle\Psi_{M_{ij}}|.
  \en
Comparing both the equation (\ref{relation tangle relation left side 1}) and the equation (\ref{relation tangle relation right side 1}), we have
  \eq
  \label{nessary and sufficient}
  \frac 1 4 \sum_{i,j=0}^1\sum_{k,l=0}^1 \lambda_{ij}\lambda_{kl}|\Psi_{M_{ij}}\rangle\otimes M_{kl}M^*_{ij}M^T_{00}M^\dag_{kl}|\alpha\rangle=\frac 1 2 \sum_{i,j=0}^1 |\Psi_{M_{ij}}\rangle\otimes M_{00}M^*_{ij}|\alpha\rangle,
  \en
 which gives rise to the first constraint equation (\ref{relation tangle required relation 1}). Note that the  relation
 (\ref{relation tangle required relation 1}) is the necessary and sufficient condition for the relation (\ref{nessary and sufficient}) due to the facts
 that the Bell-like states $|\Psi_{M_{ij}}\rangle$ form an orthonormal basis of the two-qubit Hilbert space and the unknown state $|\alpha\rangle$ is arbitrary.

Similarly, the remaining tangle relations (\ref{definition four matrix tangle relation 2})-(\ref{definition four matrix tangle relation 4}) can be
replaced by
  \eqa
  \label{example tangle relation 2}
  (1\!\!1_2\otimes B)(B\otimes 1\!\!1_2)(|\alpha\rangle\otimes |\Psi_{M_{00}}\rangle) &=& 2(E\otimes 1\!\!1_2)(|\alpha\rangle\otimes |\Psi_{M_{00}}\rangle); \\
  \label{example tangle relation 3}
  (\langle\Psi_{M_{00}}|\otimes \langle\alpha|)(1\!\!1_2\otimes B)(B\otimes 1\!\!1_2) &=& 2(\langle\Psi_{M_{00}}|\otimes \langle\alpha|)(1\!\!1_2\otimes E); \\
  \label{example tangle relation 4}
  (\langle\alpha|\otimes \langle\Psi_{M_{00}}|)(B\otimes 1\!\!1_2)(1\!\!1_2\otimes B) &=& 2(\langle\alpha|\otimes \langle\Psi_{M_{00}}|)(E\otimes 1\!\!1_2),
  \ena
  which can be respectively used to derive the constraint relations (\ref{relation tangle required relation 2})-(\ref{relation tangle required relation 4}).
  \end{proof}

  Obviously, the above algebraic proof has a topological diagrammatical interpretation in the extended Temperley--Lieb diagrammatical approach.
  Note that the operator $(B\otimes 1\!\!1_2)(1\!\!1_2\otimes B)$ in the tangle relations (\ref{definition four matrix tangle relation 1})-(\ref{definition four matrix tangle relation 4}) does not play as the teleportation operator in the teleportation equation (\ref{teleportation equation B}) because it is not acting on the product state. Hence it is not appropriate to choose the extended Temperley--Lieb configurations  of the Yang--Baxter gate $B$ such as (\ref{TL B Bell like transform}) and (\ref{TL B inverse Bell like transform}) in the following discussion. Instead, we take account of the configuration (\ref{TL B spectral theorem}) of the Yang--Baxter gate $B$ to study the configuration of the tangle relations of the BMW algebra. The topological diagrammatical representation for the algebraic calculation in (\ref{relation tangle relation right side 1}) is illustrated in
  \eq
  \label{TL left hand side tangle relation}
  \setlength{\unitlength}{0.47mm}
  \begin{array}{c}
  \begin{picture}(78,78)

%right down braid------------------------------------------------------------------------------

  \put(-86,40){\line(0,1){6}}
  \put(-86,22){\line(0,-1){23}}
  \put(-68,40){\line(0,1){6}}
  \put(-68,22){\line(0,-1){18.5}}

  \put(-89,19){\line(0,1){24}}
  \put(-65,19){\line(0,1){24}}
  \put(-89,19){\line(1,0){24}}
  \put(-89,43){\line(1,0){24}}

  \put(-86,40){\line(1,-1){18}}
  \put(-86,22){\line(1,1){7.5}}
  \put(-68,40){\line(-1,-1){7.5}}

%left up braid------------------------------------------------------------------------------

  \put(-104,70){\line(0,1){6}}
  \put(-104,52){\line(0,-1){6}}
  \put(-86,70){\line(0,1){6}}
  \put(-86,52){\line(0,-1){6}}

  \put(-107,49){\line(0,1){24}}
  \put(-83,49){\line(0,1){24}}
  \put(-107,49){\line(1,0){24}}
  \put(-107,73){\line(1,0){24}}

  \put(-104,70){\line(1,-1){18}}
  \put(-104,52){\line(1,1){7.5}}
  \put(-86,70){\line(-1,-1){7.5}}

%------------------------------------------------------------------------------

  \put(-68,46){\line(0,1){30}}
  \put(-104,46){\line(0,-1){47}}
  \put(-104,-1){\line(1,0){18}}

  \put(-86,8){\circle*{2.}}
  \put(-84,7){\tiny{$M_{00}$}}

  \put(-70.5,0){\footnotesize{$\nabla$}}

  %\put(-90,-7){\footnotesize$(B\otimes 1\!\!1_2)(1\!\!1_2\otimes B)$}

%------------------------------------------------------------------------------

  \put(-60,35){$=\sum_{i,j,k,l=0}^1\lambda_{ij}\lambda_{kl}$}

  \put(14.5,60){\line(0,1){15}}
  \put(2,60){\line(0,1){15}}
  \put(2,60){\line(1,0){12.5}}
  %\put(2,36){\line(1,0){10}}
  \put(14.5,67.5){\circle*{2.}}
  \put(15.5,66.5){\tiny{$M_{ij}$}}

  \put(14.5,45){\circle*{2.}}
  \put(16,44){\tiny{$M^\dag_{ij}$}}

  \put(14.5,0){\line(0,1){30}}
  \put(27,3.5){\line(0,1){26.5}}
  \put(2,0){\line(1,0){12.5}}
  \put(14.5,30){\line(1,0){12.5}}
  %\put(12,12){\line(1,0){10}}
  \put(27,22.5){\circle*{2.}}
  \put(28,21.5){\tiny{$M^\dag_{kl}$}}

  \put(24.5,0){\footnotesize{$\nabla$}}

  \put(2,0){\line(0,1){52.5}}
  \put(2,52.5){\line(1,0){12.5}}
  \put(14.5,37.5){\line(0,1){15}}
  \put(27,37.5){\line(0,1){37.5}}
  \put(14.5,37.5){\line(1,0){12.5}}

  \put(27,45){\circle*{2.}}
  \put(28.5,44){\tiny{$M_{kl}$}}

  \put(14.5,8){\circle*{2.}}
  \put(15.5,7){\tiny{$M_{00}$}}

  \put(34,35){$=\frac 1 4\sum_{i,j,k,l=0}^1\lambda_{ij}\lambda_{kl}$}

  \put(115.5,30){\line(0,1){15}}
  \put(103,30){\line(0,1){15}}
  \put(103,30){\line(1,0){12.5}}
  \put(115.5,37.5){\circle*{2.}}
  \put(116.5,36.5){\tiny{$M_{ij}$}}

  \put(128,34){\line(0,1){11}}

  \put(128,37.5){\circle*{2.}}
 \put(129.5,36.5){\tiny{$M_{kl}M^*_{ij}M^T_{00}M^\dag_{kl}$}}

  \put(125.5,30){\footnotesize{$\nabla$}}

  \end{picture}
  \end{array}
  \en
  where the factor $\frac 1 4$ is due to the vanishing of two pairs of the cap and cup configurations in the topological straightening deformation and before
  such the deformation all relevant single-qubit gates have to be moved with the matrix transpose rule (\ref{relation flowing gate Bell state}) of flowing a single-qubit
  gate.

 Looking at both the algebraic and topological proofs for Theorem \ref{theorem1}, we see that each of the tangle relations of the BMW algebra is associated with the corresponding
 teleportation equation. This observation can be summarized in Corollary \ref{corollary1}  to Theorem \ref{theorem1}.
  \begin{corollary}
  \label{corollary1}
  The four matrix expressions of the tangle relations (\ref{definition four matrix tangle relation 1})-(\ref{definition four matrix tangle relation 4}) of the BMW algebra  are respectively associated with the teleportation equations,
  \eqa
   \label{teleportation equation M 00 1}
   |\alpha\rangle \otimes |\Psi_{M_{00}}\rangle &=& \frac 1 2 \sum_{i,j=0}^1  |\Psi_{M_{ij}}\rangle \otimes M_{00}M^*_{ij} |\alpha\rangle; \\
   \label{teleportation equation M 00 2}
   |\Psi_{M_{00}}\rangle \otimes |\alpha\rangle &=& \frac 1 2 \sum_{i,j=0}^1 M^T_{00}M^\dag_{ij} |\alpha\rangle\otimes |\Psi_{M_{ij}}\rangle; \\
   \label{teleportation equation M 00 3}
   \langle\alpha| \otimes \langle\Psi_{M_{00}}| &=& \frac 1 2 \sum_{i,j=0}^1  \langle\Psi_{M_{ij}}|\otimes\langle\alpha|M^T_{ij}M^\dag_{00}; \\
   \label{teleportation equation M 00 4}
   \langle\Psi_{M_{00}}| \otimes \langle\alpha| &=& \frac 1 2 \sum_{i,j=0}^1 \langle\alpha|M_{ij}M^*_{00} \otimes \langle\Psi_{M_{ij}}|.
   \ena
  \end{corollary}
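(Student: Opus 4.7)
The plan is to exploit the observation, already visible in the proof of Theorem \ref{theorem1}, that each tangle relation becomes one of the teleportation equations the moment both sides are applied to a state of the form $|\Psi_{M_{00}}\rangle\otimes|\alpha\rangle$ (or a suitably permuted/dualized variant). So the corollary is essentially a re-reading of the intermediate step (\ref{example tangle relation}) and its analogues (\ref{example tangle relation 2})--(\ref{example tangle relation 4}), applied to each of the four tangle relations in turn.

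For the first tangle relation (\ref{definition four matrix tangle relation 1}), I would act on $|\Psi_{M_{00}}\rangle\otimes|\alpha\rangle$. Because $E=|\Psi_{M_{00}}\rangle\langle\Psi_{M_{00}}|$, the right-most factor $(E\otimes 1\!\!1_2)$ on both sides acts as the identity, reducing the relation to (\ref{example tangle relation}). The right-hand side $2(1\!\!1_2\otimes E)\left(|\Psi_{M_{00}}\rangle\otimes|\alpha\rangle\right)$ is exactly the state-preparation-plus-measurement arrangement of (\ref{teleportation equation E 00 transpose}), which outputs $|\alpha\rangle\otimes|\Psi_{M_{00}}\rangle$; combined with the teleportation expansion (\ref{teleportation equation M 00}) this yields precisely (\ref{teleportation equation M 00 1}). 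For the remaining three tangle relations I would apply the same procedure with the appropriate ``placement'' of the ancilla: for (\ref{definition four matrix tangle relation 2}) act on $|\alpha\rangle\otimes|\Psi_{M_{00}}\rangle$; for (\ref{definition four matrix tangle relation 3}) and (\ref{definition four matrix tangle relation 4}) act from the right on $\langle\alpha|\otimes\langle\Psi_{M_{00}}|$ and $\langle\Psi_{M_{00}}|\otimes\langle\alpha|$ respectively. In each case one $E$ factor is absorbed, the other implements a teleportation channel via (\ref{teleportation equation E 00}) or its bra-analogue, and the resulting identity matches (\ref{teleportation equation M 00 2}), (\ref{teleportation equation M 00 3}) or (\ref{teleportation equation M 00 4}).

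To make the correspondence completely transparent I would also include the diagrammatic version, using the spectral form (\ref{TL B spectral theorem}) of the Yang--Baxter gate $B$ on the left-hand side. Each tangle relation then becomes a picture analogous to (\ref{TL left hand side tangle relation}) in which the two $E$ projectors on the right-hand side straighten topologically (up to the factor $\tfrac12$ from a vanishing cup/cap pair) into a single teleportation wire carrying $|\alpha\rangle$ across the diagram, with a single-qubit gate sitting at the terminal point. Reading off this endpoint gate in each of the four orientations reproduces the four correction operators appearing on the right of (\ref{teleportation equation M 00 1})--(\ref{teleportation equation M 00 4}).

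The only real obstacle is bookkeeping: I have to make sure that when I switch between the four orientations of the tangle relation (acting from the left versus the right, and with $|\Psi_{M_{00}}\rangle$ placed on the left versus the right tensor slot) the transposes, complex conjugations and Hermitian conjugates on $M_{00}$ and $M_{ij}$ land in the correct positions predicted by (\ref{teleportation equation M 00 1})--(\ref{teleportation equation M 00 4}). The flow rule (\ref{relation flowing gate Bell state}) of a single-qubit gate along a cup/cap configuration controls each of these conversions, and the diagrammatic check above provides an independent sanity test whenever the algebra becomes opaque.
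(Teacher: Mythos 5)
Your proposal is correct and follows essentially the same route as the paper: act on $|\Psi_{M_{00}}\rangle\otimes|\alpha\rangle$ (or the appropriate permuted/dual state), absorb one Temperley--Lieb projector, straighten the remaining $E$ into the left-hand side of the teleportation equation, and use the spectral decomposition (\ref{TL B spectral theorem}) together with the already-assumed constraint relations (\ref{relation tangle required relation 1})--(\ref{relation tangle required relation 4}) to turn the $B$-side into the right-hand side, exactly as encoded in Figure~\ref{fig_tangle relation and teleportation}. The only cosmetic difference is that you spell out the algebra alongside the diagrams, whereas the paper delegates that algebra to the proof of Theorem~\ref{theorem1} and presents the corollary purely pictorially.
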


  \begin{proof}
  For example, we draw Figure~\ref{fig_tangle relation and teleportation}  to understand the relationship between the tangle relation (\ref{definition four matrix tangle relation 1}) and the teleportation equation
  (\ref{teleportation equation M 00}) or (\ref{teleportation equation M 00 1}). In Figure~\ref{fig_tangle relation and teleportation}, the top two diagrams represent the tangle relation (\ref{definition four matrix tangle relation 1}) and the bottom two
   diagrams are derived from the top two diagrams, respectively, with the topological straightening deformation.  The bottom two diagrams are just both sides of the teleportation
   equation  (\ref{teleportation equation M 00 1}). Note that the constraint relation (\ref{relation tangle required relation 1}) has been already assumed in our study.
   Similarly, the other three tangle relations (\ref{definition four matrix tangle relation 2})-(\ref{definition four matrix tangle relation 4}) respectively lead to the
   teleportation equations (\ref{teleportation equation M 00 2})-(\ref{teleportation equation M 00 4}).
  \end{proof}

\begin{figure}
 \begin{center}
  \includegraphics[width=9cm]{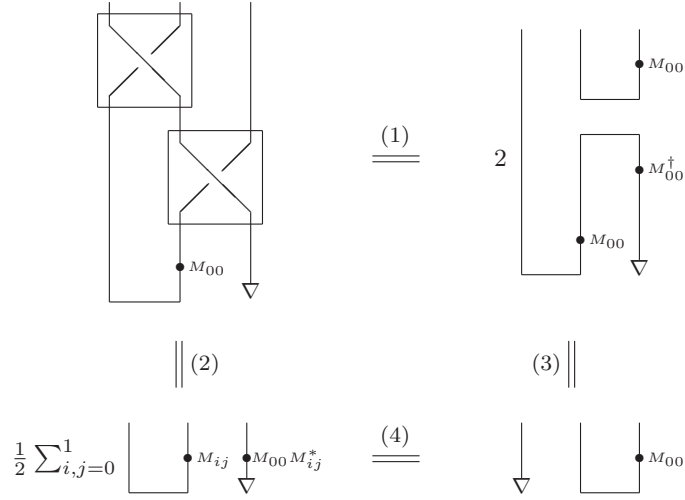}
  \end{center}
  \caption{\label{fig_tangle relation and teleportation}  Relationship between the tangle relation (\ref{definition four matrix tangle relation 1}) and the teleportation equation (\ref{teleportation equation M 00}) or (\ref{teleportation equation M 00 1}) in the topological diagrammatical representation. Let us explain the symbols (1), (2), (3) and (4), respectively. (1) The top two diagrams represent the tangle relation (\ref{example tangle relation}). (2) With the extended Temperley--Lieb configuration (\ref{TL B spectral theorem}) of the Yang--Baxter gate $B$ and the  constraint relation (\ref{relation tangle required relation 1}), the left-hand side of the tangle relation (\ref{definition four matrix tangle relation 1}) can be simplified into the right-hand side of the teleportation equation (\ref{teleportation equation M 00 1}),
  refer to  both (\ref{relation tangle relation right side 1})  and (\ref{TL left hand side tangle relation}). (3) The right-hand side of the tangle relation
   (\ref{definition four matrix tangle relation 1}) can be deformed into the left-hand side of the teleportation equation (\ref{teleportation equation M 00 1}) with the
   topological straightening operation. (4) The bottom two diagrams represent the teleportation equation (\ref{teleportation equation M 00 1}).
 }
\end{figure}

Therefore, in the extended Temperley--Lieb diagrammatical approach, the tangle relations (\ref{definition BMW algebra_tangle}) of the BMW algebra consisting
of both the Temperley--Lieb projector and the Yang--Baxter gate admit an interesting interpretation of quantum teleportation. Note that such the relationship
is independent of the specific representation of  the Temperley--Lieb projector and the Yang--Baxter gate, such as (\ref{TL matrix of BMW algebra})
and (\ref{braid matrix of BWM algebra}) presented in the reference \cite{WXSLZ15}. In view of this fact, furthermore, we will investigate the subject of how to
derive a general representation of the BMW algebra in the extended Temperley--Lieb diagrammatical approach in the next section.

\section{General construction of a representation of the tangle relations (\ref{definition BMW algebra_tangle}) of the BMW algebra}

\label{section construction rep from tangle relation}

It is well-known that how to construct an interesting braid representation (\ref{definition BMW algebra_YBE}) (or an interesting solution of the Yang--Baxter
equation \cite{YBE67}) is always a meaningful and challenge problem in the research frontier, refer to \cite{Kauffman02,YBE67, Dye03,KL04,ZKG04,ZZP15,PBR15}, and
the same is true for the construction of an interesting representation of the BMW algebra \cite{BW89}, refer to  \cite{Kauffman02,Jones89,CGX91, WXSZHW10, WXSLZ15,BMW11}.
In the previous research \cite{ZZP15}, a method of constructing the Yang--Baxter gate in the extended Temperley--Lieb diagrammatical approach has been proposed, and
hence in this section we want to exploit the extended Temperley--Lieb diagrammatical approach to construct interesting representations of the BMW algebra different 
from the representation using the Temperley--Lieb projector $E$ (\ref{TL matrix of BMW algebra}) and the Yang--Baxter gate $B$ (\ref{braid matrix of BWM algebra}). 
In view of the research result in the last section that the tangle relations (\ref{definition BMW algebra_tangle}) of the BMW algebra have a natural interpretation of 
quantum teleportation in the extended Temperley--Lieb diagrammatical approach, first of all, we focus on the general construction of the representation of the tangle 
relations (\ref{definition BMW algebra_tangle}) of the BMW algebra.

Looking at Theorem \ref{theorem1},  we see that the tangle relations  of the BMW algebra give rise to the constraint relations of single-qubit gates defining both
the Temperley--Lieb projector and  the Yang--Baxter gate. Hence we assume that the Temperley--Lieb projector $\tilde E$ and  the Yang--Baxter gate $U$ have the form in terms
of the Bell-like basis $|\Psi_{U_{ij}}\rangle$ given by
\eq
|\Psi_{U_{ij}}\rangle =(1\!\! 1_2 \otimes U_{ij} ) |\Psi\rangle
\en
with single-qubit gates $U_{ij}$ satisfying the orthonormal condition
\eq
\label{orthonormal condition U}
\frac 1 2 \tr\left(U^\dag_{i_2j_2}U_{i_1j_1}\right)=\delta_{i_1i_2}\delta_{j_1j_2},
\en
and then require them to  satisfy the tangle relations of the BMW algebra in order to derive the constraint equations of single-qubit gates $U_{ij}$.

\begin{theorem}
\label{theorem2}
The Temperley--Lieb projector $\tilde E$ assumes the form $\tilde E=|\Psi_{U_{mn}}\rangle\langle\Psi_{U_{mn}}|$ with specified $m, n$ and
the Yang--Baxter gate $U$ assumes the form of the spectral decomposition given by
\eq
\label{relation spectral theorem U gate}
U=\sum_{i,j=0}^1\mu_{ij}|\Psi_{U_{ij}}\rangle\langle\Psi_{U_{ij}}|,
\en
with  eigenvalues $\mu_{ij}$. When single-qubit gates $U_{ij}$, $i,j=0,1$ and the specified single-qubit gate $U_{mn}$
satisfy the constraint relations:
\eqa
\label{relation construction tangle relation constraint 1}\frac 1 2 \sum_{k,l=0}^1\mu_{ij}\mu_{kl}U^\dag_{mn}U_{kl}U_{ij}^*U_{mn}^TU^\dag_{kl}U_{mn} &=& U^*_{ij}U^T_{mn};\\
%\en
%\eq
\label{relation construction tangle relation constraint 2}\frac 1 2\sum_{k,l=0}^1\mu_{ij}\mu_{kl}U_{mn}^*U^T_{kl}U_{ij}^\dag U_{mn}U^*_{kl}U^T_{mn} &=& U^\dag_{ij}U_{mn};\\
%\en
%\eq
\label{relation construction tangle relation constraint 3}\frac 1 2 \sum_{k,l=0}^1\mu_{ij}\mu_{kl}U^\dag_{mn}U_{kl}U_{mn}^*U_{ij}^TU^\dag_{kl}U_{mn} &=& U^*_{mn}U^T_{ij};\\
%\en
%\eq
\label{relation construction tangle relation constraint 4}\frac 1 2 \sum_{k,l=0}^1\mu_{ij}\mu_{kl}U_{mn}^* U^T_{kl}U^\dag_{mn}U_{ij}U^*_{kl}U^T_{mn} &=& U^\dag_{mn}U_{ij},
\ena
the Temperley--Lieb projector $\tilde E$ and the Yang--Baxter gate $U $ satisfy the tangle relations (\ref{definition BMW algebra_tangle}) of the BMW algebra.
\end{theorem}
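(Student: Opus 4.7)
The plan is to mirror the proof of Theorem \ref{theorem1} but run the logic in the forward direction: assuming the four constraint equations (\ref{relation construction tangle relation constraint 1})--(\ref{relation construction tangle relation constraint 4}) hold for the single-qubit gates $U_{ij}$, derive the four matrix tangle relations for $\tilde E$ and $U$. The whole argument can be carried out cleanly in the extended Temperley--Lieb diagrammatical approach, in direct analogy with the calculation shown in (\ref{TL left hand side tangle relation}).

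First, I would observe that the orthonormality condition (\ref{orthonormal condition U}) is equivalent to $\{|\Psi_{U_{ij}}\rangle\}_{i,j=0,1}$ being an orthonormal basis of $\mathcal H_2\otimes \mathcal H_2$, so $\sum_{i,j}|\Psi_{U_{ij}}\rangle\langle \Psi_{U_{ij}}| = 1\!\!1_4$. Combining this completeness relation with the universal flow identity (\ref{relation flowing gate Bell state}), I obtain the teleportation equations analogous to (\ref{teleportation equation M 00 1})--(\ref{teleportation equation M 00 4}), with every $M_{ij}$ replaced by $U_{ij}$ and every $M_{00}$ replaced by the distinguished $U_{mn}$. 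This step is essentially free of content beyond what is already used in Section \ref{section teleportation TL}.

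Second, I would verify the four tangle relations one at a time, using the same test-state strategy as in the proof of Theorem \ref{theorem1}. For the first tangle relation, I apply both sides to $|\Psi_{U_{mn}}\rangle\otimes |\alpha\rangle$ with an arbitrary $|\alpha\rangle$; because $\tilde E = |\Psi_{U_{mn}}\rangle\langle \Psi_{U_{mn}}|$ is a rank-one projector onto $|\Psi_{U_{mn}}\rangle$, this captures the entire operator equation. The right-hand side is simplified by the generalized analog of (\ref{teleportation equation E 00 transpose}) followed by the generalized analog of (\ref{teleportation equation M 00}), producing a single sum $\frac{1}{2}\sum_{i,j}|\Psi_{U_{ij}}\rangle \otimes (\text{RHS of } (\ref{relation construction tangle relation constraint 1}))|\alpha\rangle$. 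The left-hand side, expanded via the spectral decomposition (\ref{relation spectral theorem U gate}) of $U$ together with the flow rule, yields a double sum $\frac{1}{4}\sum_{i,j,k,l}\mu_{ij}\mu_{kl}|\Psi_{U_{ij}}\rangle \otimes (\text{LHS of } (\ref{relation construction tangle relation constraint 1}))|\alpha\rangle$ that mirrors (\ref{relation tangle relation right side 1}) exactly. Constraint (\ref{relation construction tangle relation constraint 1}) then equates the two coefficients of each $|\Psi_{U_{ij}}\rangle$; since $|\alpha\rangle$ is arbitrary and $\{|\Psi_{U_{ij}}\rangle\}$ is orthonormal, this gives the first tangle relation as an operator equation. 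The remaining three tangle relations are handled by testing on $|\alpha\rangle \otimes |\Psi_{U_{mn}}\rangle$, $\langle \Psi_{U_{mn}}| \otimes \langle \alpha|$, and $\langle \alpha| \otimes \langle \Psi_{U_{mn}}|$ respectively, each reducing to the correspondingly labeled constraint.

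The main obstacle I anticipate is not conceptual but combinatorial: every time a single-qubit gate flows across a cap or cup via (\ref{relation flowing gate Bell state}) it acquires a transpose, and when it migrates from a Bell-like state to its dual it acquires an adjoint or complex conjugate. The rigid operator strings $U^\dag_{mn}U_{kl}U^*_{ij}U^T_{mn}U^\dag_{kl}U_{mn}$ and their analogs must be reproduced faithfully, so one has to track transposes, adjoints, and complex conjugations carefully, especially when the projector $\tilde E$ acts from the right or on the second pair of tensor factors and when adjoints convert the equations into their bra forms. The diagrammatic rendering of (\ref{TL left hand side tangle relation}), adapted to each of the four tangle relations, should make this bookkeeping essentially mechanical.
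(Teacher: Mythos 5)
Your proposal is correct and follows essentially the same route as the paper: the paper's proof of Theorem~\ref{theorem2} likewise tests the tangle relations on $|\Psi_{U_{mn}}\rangle\otimes|\alpha\rangle$ (and its reorderings and bra versions), expands the left side via the spectral decomposition (\ref{relation spectral theorem U gate}) and the right side via the generalized teleportation equations, and identifies the resulting coefficient equality with the constraints (\ref{relation construction tangle relation constraint 1})--(\ref{relation construction tangle relation constraint 4}) up to conjugation by $U_{mn}$. The only cosmetic difference is that the paper phrases the computation as deriving the constraints from the tangle relations while you run the same coefficient-matching equivalence in the sufficiency direction actually demanded by the theorem, and your explicit observation that the rank-one projector $\tilde E$ makes the chosen test states exhaustive is the correct justification for that direction.
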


\begin{proof}
The proof is a direct generalization of the proof for Theorem~\ref{theorem1}. For example, we derive the constraint 
relation (\ref{relation construction tangle relation constraint 1})
from the tangle relation using the  Temperley--Lieb projector $\tilde E$ and the Yang--Baxter gate $U$ given by
\eq
\label{relation construction tangle relation 1}(U\otimes 1\!\!1_2)(1\!\!1_2\otimes U)(\tilde E\otimes 1\!\!1_2) = 2(1\!\!1_2\otimes \tilde E)(\tilde E\otimes 1\!\!1_2).
\en
Both sides of the equation (\ref{relation construction tangle relation 1}) acting on $|\Psi_{M_{00}}\rangle\otimes |\alpha\rangle$ give rise to
\eqa
&&\sum_{i,j=0}^1\sum_{k,l=0}^1\mu_{ij}\mu_{kl}\left((U_{ij},U^\dag_{ij}),1\!\!1_2\right)
\left(1\!\!1_2,(U_{kl},U^\dag_{kl})\right)\left(|\Psi_{U_{mn}}\rangle\otimes|\alpha\rangle\right) \nonumber \\
&=&2\left(1\!\!1_2,(U_{mn},U^\dag_{mn})\right)\left(|\Psi_{U_{mn}}\rangle\otimes|\alpha\rangle\right),
\ena
with the Bell-like projector $(U_{ij},U^\dag_{ij})$ defined in (\ref{definition notation projector with pair}), which can be simplified with the help of extended
Temperley--Lieb  configurations such as (\ref{TL teleportation transposed M 00}) and (\ref{TL left hand side tangle relation}),
\eq
\label{relation construction tangle reduece relation}
\frac 1 4\sum_{i,j=0}^1\sum_{k,l=0}^1|\Psi_{U_{ij}}\rangle\otimes \mu_{ij}\mu_{kl}U_{kl}U_{ij}^*U_{mn}^TU^\dag_{kl}|\alpha\rangle=U^T_{mn}U^\dag_{mn}|\alpha\rangle\otimes|\Psi_{U_{mn}}\rangle.
\en
Furthermore, with the teleportation equation of describing the transport of the unknown qubit $|\beta\rangle= U^T_{mn}U^\dag_{mn}|\alpha\rangle$,
\eq
\label{relation construction tangle relaton tele equation}
|\beta\rangle\otimes|\Psi_{U_{mn}}\rangle=\frac 1 2\sum_{i,j=0}^1|\Psi_{U_{ij}}\rangle\otimes U_{mn}U^*_{ij}|\beta\rangle,
\en
which can be derived with the extended Temperley--Lieb configuration (\ref{TL teleportation M 00}), the equation (\ref{relation construction tangle reduece relation})
leads to the constraint relation of single-qubit gates $U_{ij}$,
\eq
\frac 1 2 \sum_{k,l=0}^1\mu_{ij}\mu_{kl}U_{kl}U_{ij}^*U_{mn}^TU^\dag_{kl}U_{mn}U^*_{mn}=U_{mn}U^*_{ij},
\en
which is equivalent to (\ref{relation construction tangle relation constraint 1}).
\end{proof}

\begin{corollary}
The eigenvalues $\mu_{ij}$ of the Yang--Baxter gate $U$ (\ref{relation spectral theorem U gate}) satisfy the constraint relation
\eq
\frac 1 2 \sum_{k,l=0}^1\mu_{mn}\mu_{kl}=1,
\en
when the Temperley--Lieb projector $\tilde E$ and the Yang--Baxter gate $U $ satisfy the tangle relations (\ref{definition BMW algebra_tangle}) of the BMW algebra.
\end{corollary}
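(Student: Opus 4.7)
The plan is to extract a scalar identity from the matrix constraint relation (\ref{relation construction tangle relation constraint 1}) of Theorem~\ref{theorem2} by specializing the free indices $(i,j)$ to $(m,n)$ and then exploiting unitarity of the single-qubit gates $U_{ij}$ (which follows from the orthonormal condition (\ref{orthonormal condition U}) together with the Bell-like basis construction).

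First, I would set $i=m$, $j=n$ in (\ref{relation construction tangle relation constraint 1}). The right-hand side becomes $U^*_{mn}U^T_{mn}$. Since $U_{mn}$ is unitary, $U_{mn}U^\dag_{mn} = 1\!\!1_2$, and taking the transpose yields $U^*_{mn}U^T_{mn} = 1\!\!1_2$. So the right-hand side collapses to the identity.

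Next, I would simplify the left-hand side using the same two unitarity facts. The factor $U^*_{mn}U^T_{mn}$ appearing in the middle of each summand is again $1\!\!1_2$, after which the summand reduces to $U^\dag_{mn} U_{kl} U^\dag_{kl} U_{mn}$. Using $U_{kl}U^\dag_{kl} = 1\!\!1_2$ and then $U^\dag_{mn} U_{mn} = 1\!\!1_2$, each summand collapses to $1\!\!1_2$, and the left-hand side becomes $\tfrac{1}{2}\bigl(\sum_{k,l=0}^1 \mu_{mn}\mu_{kl}\bigr)\,1\!\!1_2$.

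Equating the two sides gives the scalar identity $\tfrac{1}{2}\sum_{k,l=0}^1 \mu_{mn}\mu_{kl} = 1$, which is the claim. The only obstacle is really bookkeeping: one has to chase the transposes and conjugates carefully so that the unitarity cancellations land in the right places. As a consistency check, I would note that specializing the other three constraints (\ref{relation construction tangle relation constraint 2})--(\ref{relation construction tangle relation constraint 4}) at $(i,j)=(m,n)$ produces the same scalar identity by an entirely analogous collapse, so the result is independent of which tangle relation of the BMW algebra one starts from.
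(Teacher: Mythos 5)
Your argument is correct and reaches the stated identity, but it takes a slightly different route from the paper. The paper keeps the indices $(i,j)$ general, takes the \emph{trace} of the constraint relation (\ref{relation construction tangle relation constraint 1}), and uses the Hilbert--Schmidt orthonormality (\ref{orthonormal condition U}) to reduce both sides to multiples of $\delta_{im}\delta_{jn}$; the corollary then drops out at $i=m$, $j=n$. You instead specialize to $(i,j)=(m,n)$ first and observe that the \emph{entire matrix equation} collapses to a scalar multiple of $1\!\!1_2$ via the cancellations $U^*_{mn}U^T_{mn}=(U_{mn}U^\dag_{mn})^*=1\!\!1_2$ and $U_{kl}U^\dag_{kl}=1\!\!1_2$. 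Your version is marginally stronger at that index (you get the full matrix identity, not just its trace) and avoids invoking the orthonormal condition at all, while the paper's traced version also records the consistent $0=0$ content of the constraint for $(i,j)\neq(m,n)$. One small correction: unitarity of the $U_{ij}$ does \emph{not} follow from the orthonormal condition (\ref{orthonormal condition U}), which only fixes the normalized Hilbert--Schmidt inner products; rather, unitarity is part of the standing hypothesis that the $U_{ij}$ are single-qubit gates (and the paper's own trace computation tacitly uses it as well, e.g.\ in cancelling $U^\dag_{kl}U_{kl}$ under the trace). With that attribution fixed, your proof and your consistency check against the other three constraints are both sound.
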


\begin{proof}
The trace of the constraint relation (\ref{relation construction tangle relation constraint 1}) has the form
\eq
\frac 1 2 \sum_{k,l=0}^1\mu_{ij}\mu_{kl}\tr\left(U^\dag_{mn}U_{kl}U_{ij}^*U_{mn}^TU^\dag_{kl}U_{mn}\right)=\tr\left(U^*_{ij}U^T_{mn}\right)
\en
which leads to
\eq
\label{relation trace tangle required relation}
\frac 1 2 \sum_{k,l=0}^1\mu_{ij}\mu_{kl}\delta_{im}\delta_{jn}=\delta_{im}\delta_{jn},
\en
where the orthonormal condition (\ref{orthonormal condition U}) for the Bell-like basis states $|\Psi_{U_{ij}}\rangle$ has been applied. Both sides of
the constraint relation (\ref{relation trace tangle required relation}) have the summation over the indices $i$, $j$, which completes the proof for
the corollary.
\end{proof}

As an example, we solve the constraint relations (\ref{relation construction tangle relation constraint 1})-(\ref{relation construction tangle relation constraint 4})
to obtain interesting representations of the BMW algebra. The unitary bases $U_{ij}$ are set as $W_{ij}=X^i Z^j$ defining the Bell states
$|\psi(ij)\rangle$ (\ref{definition four Bell states}), and the unitary matrix $U_{mn}$ defining the Temperley--Lieb projector $\tilde E$  is also chosen
as $U_{mn}=X^m Z^n$. The detailed calculation is shown in Appendix \ref{appendix trial solution tangle relation}, and the results are summarized as follows.
\begin{itemize}
  %\itemindent -17pt
  \item The Temperley--Lieb projector $\tilde E$ has the form
  \eq
  \label{representation tangle relation E 1}
  \tilde E=\frac 1 2 \left(
                       \begin{array}{cccc}
                         1 & 0 & 0 & \epsilon \\
                         0 & 0 & 0 & 0 \\
                         0 & 0 & 0 & 0 \\
                         \epsilon & 0 & 0 & 1 \\
                       \end{array}
                     \right),
\en
where $\epsilon=1$ for $\tilde E=|\psi(00)\rangle \langle \psi(00)|$ and
  $\epsilon=-1$ for $\tilde E=|\psi(01)\rangle \langle \psi(01)|$,
and the associated Yang--Baxter gate $U$ has the form
  \eq
  \label{representation tangle relation U 1}
U=\left(
         \begin{array}{cccc}
         \cos\phi & 0 & 0 & i\sin\phi \\
         0 & -i\epsilon\sin\phi & \pm\cos\phi & 0 \\
         0 & \pm\cos\phi & -i\epsilon\sin\phi & 0 \\
         i\sin\phi & 0 & 0 & \cos\phi \\
         \end{array}
         \right),
  \en
  or
  \eq
  \label{representation tangle relation U 2}
  U=\left(
                                                                \begin{array}{cccc}
                                                                  0 & 0 & 0 & e^{i\phi} \\
                                                                  0 & \epsilon e^{-i\phi} & 0 & 0 \\
                                                                  0 & 0 & \epsilon e^{-i\phi} & 0 \\
                                                                  e^{i\phi} & 0 & 0 & 0 \\
                                                                \end{array}
                                                              \right),
  \en
  with $\phi\in[0,2\pi)$.
  \item The Temperley--Lieb projector $\tilde E$ has the form
  \eq
  \label{representation tangle relation E 2}
  \tilde E=\frac 1 2 \left(
                       \begin{array}{cccc}
                         0 & 0 & 0 & 0 \\
                         0 & 1 & \epsilon & 0 \\
                         0 & \epsilon & 1 & 0 \\
                         0 & 0 & 0 & 0 \\
                       \end{array}
                     \right),
  \en
  where $\epsilon=1$ for $\tilde E=|\psi(10)\rangle \langle \psi(10)|$ and
  $\epsilon=-1$ for $\tilde E=|\psi(11)\rangle \langle \psi(11)|$,
and the corresponding Yang--Baxter gate $U$ has the form
  \eq
  \label{representation tangle relation U 3}
  U=\left(
    \begin{array}{cccc}
                                                                  i\sin\phi & 0 & 0 & \cos\phi \\
                                                                  0 & \pm\cos\phi & -i\epsilon\sin\phi & 0 \\
                                                                  0 & -i\epsilon\sin\phi & \pm\cos\phi & 0 \\
                                                                  \cos\phi & 0 & 0 & i\sin\phi \\
                                                                \end{array}
    \right),
  \en
  or
  \eq
  \label{representation tangle relation U 4}
  U=\left(
                                                                \begin{array}{cccc}
                                                                  e^{i\phi} & 0 & 0 & 0 \\
                                                                  0 & 0 & \epsilon e^{-i\phi} & 0 \\
                                                                  0 & \epsilon e^{-i\phi} & 0 & 0 \\
                                                                  0 & 0 & 0 & e^{i\phi} \\
                                                                \end{array}
                                                              \right),
  \en
  with $\phi\in[0,2\pi)$.
\end{itemize}

In the viewpoint of the extended Temperley--Lieb diagrammatical approach \cite{ZZP15}, Theorem \ref{theorem2} can be easily generalized. We consider
the general case for the construction of the representation of the tangle relations (\ref{definition BMW algebra_tangle}) in which a two-qubit projective
measurement operator $\tilde E$ and a two-qubit gate $G$ are involved, namely that the Temperley--Lieb projector $\tilde E$ and the Yang--Baxter gate $G$ are
not supposed.

\begin{theorem}
\label{theorem3}
A two-qubit projector measurement operator $\tilde E=|\Psi_{U_{mn}}\rangle\langle\Psi_{U_{mn}}|$
and a two-qubit quantum gate $G$ given by
\eq
G=\sum_{i,j,k,l=0}^1\tilde G_{ij,kl}|\Psi_{U_{ij}}\rangle\langle\Psi_{U_{kl}}|
\en
with 16 entries of complex numbers $\tilde  G_{ij,kl}$ form the representation of the tangle relations (\ref{definition BMW algebra_tangle})
when the constraint relations
\eqa
\label{relation construction general tangle relation constraint 1}
\frac 1 2\sum_{k_1,l_1=0}^1\sum_{i_2,j_2,k_2,l_2=0}^1\tilde G_{i_1j_1,k_1l_1}\tilde G_{i_2j_2,k_2l_2}U^\dag_{mn}U_{i_2j_2}U^*_{k_1l_1}U^T_{mn}U_{k_2l_2}^\dag U_{mn}=& U^*_{i_1j_1}U^T_{mn};\\
%\en
%\eq
\label{relation construction general tangle relation constraint 2}
\frac 1 2\sum_{k_1,l_1=0}^1\sum_{i_2,j_2,k_2,l_2=0}^1\tilde G_{i_1j_1,k_1l_1}\tilde G_{i_2j_2,k_2l_2}U^*_{mn}U^T_{i_2j_2}U^\dag_{k_1l_1}U_{mn}U^*_{k_2l_2} U^T_{mn}=& U^\dag_{i_1j_1}U_{mn};\\
%\en
%\eq
\label{relation construction general tangle relation constraint 3}
\frac 1 2\sum_{k_1,l_1=0}^1\sum_{i_2,j_2,k_2,l_2=0}^1\tilde G_{i_1j_1,k_1l_1}\tilde G_{i_2j_2,k_2l_2}U^\dag_{mn}U_{i_2j_2}U^*_{mn}U^T_{k_1l_1}U_{k_2l_2}^\dag U_{mn}=& U^*_{mn}U^T_{i_1j_1};\\
%\en
%\eq
\label{relation construction general tangle relation constraint 4}
\frac 1 2\sum_{k_1,l_1=0}^1\sum_{i_2,j_2,k_2,l_2=0}^1\tilde G_{i_1j_1,k_1l_1}\tilde G_{i_2j_2,k_2l_2}U^*_{mn}U^T_{i_2j_2}U^\dag_{mn}U_{k_1l_1}U^*_{k_2l_2} U^T_{mn}=& U^\dag_{mn}U_{i_1j_1},
\ena
are satisfied.
\end{theorem}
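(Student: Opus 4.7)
The plan is to mirror the proof of Theorem~\ref{theorem2} while accommodating the more general matrix form of $G$ in place of the spectrally decomposed Yang--Baxter gate $U$. The structural skeleton of the argument is unchanged: each of the four tangle relations, once applied to a probe state built from $|\Psi_{U_{mn}}\rangle$ and an arbitrary qubit $|\alpha\rangle$, reduces to a constraint equation on single-qubit gates by comparing coefficients in the orthonormal Bell-like basis $\{|\Psi_{U_{ij}}\rangle\}$.

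First I would focus on deriving (\ref{relation construction general tangle relation constraint 1}) from the first tangle relation $(G\otimes 1\!\!1_2)(1\!\!1_2\otimes G)(\tilde E\otimes 1\!\!1_2) = 2(1\!\!1_2\otimes \tilde E)(\tilde E\otimes 1\!\!1_2)$, applied to $|\Psi_{U_{mn}}\rangle\otimes|\alpha\rangle$. On the right-hand side, the extended Temperley--Lieb configuration analogous to (\ref{TL teleportation transposed M 00}) yields $\tfrac 1 2\, U^T_{mn}U^\dag_{mn}|\alpha\rangle\otimes|\Psi_{U_{mn}}\rangle$, and further expanding this vector by the Bell-like teleportation equation of type (\ref{teleportation equation M 00}) produces a single sum over $i_1,j_1$ whose component along $|\Psi_{U_{i_1j_1}}\rangle$ is exactly $U^*_{i_1j_1}U^T_{mn}|\alpha\rangle$ (up to the arbitrary $|\alpha\rangle$).

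Second, I would expand the left-hand side by substituting both copies of $G=\sum \tilde G_{ij,kl}|\Psi_{U_{ij}}\rangle\langle\Psi_{U_{kl}}|$, producing a quadruple sum of Bell-like projectors acting on $|\Psi_{U_{mn}}\rangle\otimes|\alpha\rangle$. Pushing these projectors through the cup/cap diagrams exactly as in (\ref{TL left hand side tangle relation}) by the flow rule (\ref{relation flowing gate Bell state}), each inner matrix element converts the defining single-qubit gate into the appropriate transpose, complex conjugate, or Hermitian conjugate. The result is, up to the prefactor $\tfrac 1 2$ coming from one pair of vanishing cup and cap, a sum over $i_1,j_1$ of $|\Psi_{U_{i_1j_1}}\rangle\otimes\bigl(\text{operator string}\bigr)|\alpha\rangle$, with operator string matching the bracket on the left of (\ref{relation construction general tangle relation constraint 1}). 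Orthonormality of $\{|\Psi_{U_{ij}}\rangle\}$ together with the arbitrariness of $|\alpha\rangle$ then yields the stated identity by component matching.

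Finally, the remaining three constraints follow by applying the same scheme to the other three tangle relations with probe vectors $|\alpha\rangle\otimes|\Psi_{U_{mn}}\rangle$, $\langle\Psi_{U_{mn}}|\otimes\langle\alpha|$ and $\langle\alpha|\otimes\langle\Psi_{U_{mn}}|$, together with the dual teleportation identities (\ref{teleportation equation M 00 2})--(\ref{teleportation equation M 00 4}). I anticipate that the main obstacle will be purely notational: without the diagonal spectral structure $\mu_{ij}\delta$--like collapse used in Theorem~\ref{theorem2}, the two $G$'s bring in four independent summation indices, so careful diagrammatic bookkeeping of which single-qubit gate receives a $T$, a $*$, or a $\dag$ as it migrates across each cup or cap is required to ensure the resulting operator strings match precisely those stated in (\ref{relation construction general tangle relation constraint 1})--(\ref{relation construction general tangle relation constraint 4}).
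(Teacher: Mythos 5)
Your proposal is correct and follows exactly the route the paper intends: the paper omits the proof of Theorem~\ref{theorem3}, stating only that it is a direct generalization of the proof of Theorem~\ref{theorem2}, and your argument is precisely that generalization (apply each tangle relation to the probe state built from $|\Psi_{U_{mn}}\rangle$ and an arbitrary $|\alpha\rangle$, expand $G$ in the Bell-like basis, use the flow rule and cup/cap contractions, and match components using orthonormality and the arbitrariness of $|\alpha\rangle$). The only caveat is a minor bookkeeping imprecision: the right-hand side first reduces to $U^T_{mn}U^\dag_{mn}|\alpha\rangle\otimes|\Psi_{U_{mn}}\rangle$ and its $|\Psi_{U_{i_1j_1}}\rangle$-component is $\tfrac 1 2 U_{mn}U^*_{i_1j_1}U^T_{mn}U^\dag_{mn}|\alpha\rangle$, so the stated form of the constraint is obtained only after conjugating the resulting identity by $U_{mn}$, exactly as in the last step of the paper's proof of Theorem~\ref{theorem2}.
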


The proof for Theorem \ref{theorem3} is a direct generalization of the proof for Theorem \ref{theorem2}, and so it is omitted here. About how to solve these constraint
relations to obtain an interesting representation of the tangle relations (\ref{definition BMW algebra_tangle}) is a challenge problem in future research, because the
result may be not a representation of the BMW algebra but indeed has an interesting interpretation of quantum teleportation. In addition, the constraint relations
(\ref{relation construction tangle relation constraint 1})-(\ref{relation construction tangle relation constraint 4}) and the constraint relations (\ref{relation construction general tangle relation constraint 1})-(\ref{relation construction general tangle relation constraint 4}) can be reformulated with the new conventions and notations, refer to Appendix \ref{skew transpose}.

\section{Concluding remarks}

\label{section concluding remarks}

In this paper, we describe quantum teleportation protocol  \cite{BBCCJPW93, Vaidman94, BDM00, Werner01} and teleportation-based quantum computation \cite{GC99,Nielsen03} using the generators of the BMW algebra including both the Yang--Baxter gate and the Temperley--Lieb projector. We point out that the tangle relations defining the BMW algebra have a close connection with the teleportation process, and thus the extended Temperley--Lieb diagrammatical approach \cite{Zhang06,ZZP15} properly characterizes the topological feature of quantum teleportation. We propose a meaningful approach of constructing a general representation of the tangle relations of the BMW algebra and obtain interesting representations
of the BMW algebra.

{\em Notes Added}. After this paper is done, we are occasionally informed that the Yang--Baxter gates (\ref{representation tangle relation U 1}) and
(\ref{representation tangle relation U 3}) have been already presented in the preprint \cite{PBR15}. As a matter of fact, these gates are derived in two essentially
different approaches. We derive such the Yang--Baxter gates in the extended Temperley--Lieb diagrammatical approach \cite{Zhang06,ZZP15}, whereas the authors of \cite{PBR15}
obtain them via the algebraic approach of the cyclic group. We study and look for interesting representations of the BMW algebra, which are not involved  in \cite{PBR15}.

\section*{Acknowledgements}

This work was supported by the starting Grant 273732 of Wuhan University, P. R. China and is supported by
the NSF of China (Grant No. 11574237 and 11547310).

\appendix

\section{The Brauer algebra and quantum teleportation}

\label{appendix Brauer algebra}

It is well-known that the BMW algebra \cite{BW89} is the algebraic deformation of the Brauer algebra \cite{Brauer37}. Quantum teleportation using
the Brauer algebra has been explored in \cite{Zhang06}, which originally motivated the authors to study quantum teleportation using the BMW
algebra and write down the present paper. Here we make a simple sketch on the Brauer algebra and its relation to quantum teleportation.

The Brauer algebra $D_n(d)$ \cite{Brauer37} with the loop parameter $d$ is generated by the Temperley--Lieb idempotents $e_i$ and the
permutations $v_i$ with $i=1,\ldots, n-1$. The Temperley--Lieb idempotents $e_i$ satisfy the algebraic relations,
\eq
\label{definition Brauer algebra}
  \begin{array}{ll}
    e_{i}^2=e_{i} &  e_{i}e_{i\pm 1}e_{i}=d^{-2}e_{i}, \\
    e_ie_j=e_je_i & |i-j|\geq 2,
  \end{array}
\en
and the permutation generators $v_i$ satisfy
\eq
    v_i^2=1\!\!1,\quad
    v_{i}v_{i\pm 1}v_{i}=v_{i\pm 1}v_{i}v_{i\pm 1}.
\en
Both generators satisfy the first type of the mixed relations,
\eq
  e_{i}v_i=v_{i}e_{i}= e_i,
\en
and the second type of the mixed relations,
\eq
\label{tangle_Brauer}
v_{i\pm 1}v_{i}e_{i\pm 1}=e_{i}v_{i\pm 1}v_{i}=de_{i}e_{i\pm 1},
\en
which are called the tangle relations of the Brauer algebra in this paper.

A tensor product representation of the Brauer algebra can be constructed in terms of the Bell state
projector $|\Psi\rangle\langle\Psi|$ (\ref{definition four Bell states}) and the permutation gate $P$
defined by $P|ij\rangle=|ji\rangle$ as follows
\eqa
e_i &=& 1\!\!1^{\otimes(i-1)}\otimes |\Psi\rangle\langle\Psi|\otimes 1\!\!1^{\otimes(n-i-1)},\\
v_i &=& 1\!\!1^{\otimes(i-1)}\otimes P\otimes 1\!\!1^{\otimes(n-i-1)},
\ena
with the loop parameter $d=2$. Using the Bell state projector  $|\Psi\rangle\langle\Psi|$, we perform the teleportation process
in the way
\eq
 (|\Psi\rangle\langle\Psi|\otimes 1\!\! 1_2)(|\alpha\rangle\otimes |\Psi\rangle) = \frac 1 2 |\Psi\rangle \otimes  |\alpha\rangle,
 \en
which is a special case of (\ref{teleportation equation projector})  for $i=j=0$. In terms of the permutation gate $P$, we define
the teleportation operator as $(1\!\!1_2\otimes P)(P\otimes 1\!\!1_2)$ to swap the quantum state in the way
\eq
 (1\!\!1_2\otimes P)(P\otimes 1\!\!1_2)(|\alpha\rangle\otimes |ij\rangle) = |ij\rangle\otimes|\alpha\rangle.
 \en
With the configuration (\ref{TL Bell measurement}) of the Bell state projector $|\Psi\rangle\langle\Psi|$
and the extended Temperley--Lieb configuration  of the permutation gate $P$,
 \eq
  \setlength{\unitlength}{0.5mm}
  \begin{array}{c}
  \begin{picture}(130,32)

  \put(20,16){$P=\sum_{i,j=0}^1(-1)^{i\cdot j}$}

  \put(92,18){\line(0,1){12}}
  \put(82,18){\line(0,1){12}}
  \put(82,18){\line(1,0){10}}
  \put(92,24){\circle*{2.}}
  \put(94,23){\tiny{$W_{ij}$}}
  \put(82,0){\line(0,1){12}}
  \put(92,0){\line(0,1){12}}
  \put(82,12){\line(1,0){10}}
  \put(92,6){\circle*{2.}}
  \put(94,5){\tiny{$W^\dag_{ij}$}}
 \put(110,12){,}
  \end{picture}
  \end{array}
  \en
we reformulate the tangle relations  (\ref{tangle_Brauer})  of the Brauer algebra as
 \eq
  (P\otimes 1\!\!1_2)(1\!\!1_2\otimes P)\left(|\Psi\rangle\otimes |\alpha\rangle\right)
  = 2(1\!\!1_2\otimes |\Psi\rangle\langle\Psi|)\left(|\Psi\rangle\otimes |\alpha\rangle\right)
  \en
 which can be reduced into the teleportation equation (\ref{teleportation equation original}), refer to the proof for Corollary~\ref{corollary1}. 

When the permutation gate $P$ is replaced with the Yang--Baxter gate $B$ (\ref{braid matrix of BWM algebra}),
the above representation of the Brauer algebra is substituted by  the representation of the BMW algebra,
so the study of quantum teleportation using the Brauer algebra in \cite{Zhang06} naturally points towards the study
of quantum teleportation using the BMW algebra in this paper.

\section{More on the extended Temperley--Lieb configurations of the Yang--Baxter gate $B$ (\ref{braid matrix of BWM algebra})}

\label{appendix further study TL B}

It is obvious that the extended Temperley--Lieb configurations \cite{Zhang06, ZZP15} play the essential roles throughout this paper. In accordance with the
reference \cite{ZZP15}, a Yang--Baxter gate allows various but equivalent extended Temperley--Lieb configurations. For example, three
distinct configurations of the  Yang--Baxter gate $B$ (\ref{braid matrix of BWM algebra}) are presented in Subsection~\ref{subsection TL configuration of B}.
Here another two extended Temperley--Lieb configurations of the Yang--Baxter gate $B$ are introduced and they may be useful elsewhere.

The Yang--Baxter gate $B$ (\ref{braid matrix of BWM algebra}) can be related to the Temperley--Lieb projector $E$ (\ref{TL matrix of BMW algebra}) in the way
\eq
\label{B gate decomposition in E}
B=\tilde U+2ie^{i\frac 3 4 \pi}E,
\en
where $\tilde U$ is a unitary matrix with the decomposition
\eq
\tilde U=e^{i\frac 3 4\pi}1\!\!1_4+\sqrt2\left(|\psi(10)\rangle\langle\psi(10)|+|\Psi_{R_{2\phi}}\rangle\langle\Psi_{R_{2\phi}}|\right),
\en
with $|\psi(10)\rangle=(1\!\!1_2\otimes X)|\Psi\rangle$ and $|\Psi_{R_{2\phi}}\rangle=(1\!\!1_2\otimes R_{2\phi})|\Psi\rangle$, $R_{2\phi}$ denoting
the phase shift gate (\ref{definition H S R gates}), and the associated extended Temperley--Lieb configuration is illustrated in
\eq
\label{TL B other form 1}
  \setlength{\unitlength}{0.5mm}
  \begin{array}{c}
  \begin{picture}(130,40)

  \put(-25,-2){\footnotesize{$B$}}

  \put(-31,29){\line(0,1){6}}
  \put(-31,11){\line(0,-1){6}}
  \put(-13,29){\line(0,1){6}}
  \put(-13,11){\line(0,-1){6}}

  \put(-34,8){\line(0,1){24}}
  \put(-10,8){\line(0,1){24}}
  \put(-34,8){\line(1,0){24}}
  \put(-34,32){\line(1,0){24}}

  \put(-31,29){\line(1,-1){18}}
  \put(-31,11){\line(1,1){7.3}}
  \put(-13,29){\line(-1,-1){7.3}}

  \put(-3,19){$=e^{i\frac 3 4\pi}$}

  \put(32,5){\line(0,1){30}}
  \put(22,5){\line(0,1){30}}

  \put(40,19){$+\sqrt2$}

  \put(72,23){\line(0,1){12}}
  \put(62,23){\line(0,1){12}}
  \put(62,23){\line(1,0){10}}
  \put(72,29){\circle*{2.}}
  \put(74,28){\tiny{$X$}}

  \put(62,5){\line(0,1){12}}
  \put(72,5){\line(0,1){12}}
  \put(62,17){\line(1,0){10}}
  \put(72,11){\circle*{2.}}
  \put(74,10){\tiny{$X$}}

    \put(80,19){$+\sqrt2$}

  \put(112,23){\line(0,1){12}}
  \put(102,23){\line(0,1){12}}
  \put(102,23){\line(1,0){10}}
  \put(112,29){\circle*{2.}}
  \put(114,28){\tiny{$R_{2\phi}$}}

  \put(102,5){\line(0,1){12}}
  \put(112,5){\line(0,1){12}}
  \put(102,17){\line(1,0){10}}
  \put(112,11){\circle*{2.}}
  \put(114,10){\tiny{$R^\dag_{2\phi}$}}

      \put(118,19){$+2ie^{i\frac 3 4 \pi}$}

  \put(156,23){\line(0,1){12}}
  \put(146,23){\line(0,1){12}}
  \put(146,23){\line(1,0){10}}
  \put(156,29){\circle*{2.}}
  \put(158,28){\tiny{$M_{00}$}}

  \put(146,5){\line(0,1){12}}
  \put(156,5){\line(0,1){12}}
  \put(146,17){\line(1,0){10}}
  \put(156,11){\circle*{2.}}
  \put(158,10){\tiny{$M_{00}^\dag$}}

  \end{picture}
  \end{array}
  \en
  where the two vertical lines represent for the identity matrix $1\!\!1_4$.

  Note that the single-qubit gate $M_{00}$ is defined as $M_{00}=R_\phi HSH R_\phi$ (\ref{definition M ij gates}).
  And bring such decomposition of the $M_{00}$ gate into the relation (\ref{B gate decomposition in E}). After some
  algebra, the Yang--Baxter gate $B$ takes another form
  \eq
  B=e^{i\frac 3 4\pi}\left(1\!\!1_4-|\psi(10)\rangle\langle\psi(10)|-|\Psi_{R_{2\phi}}\rangle\langle\Psi_{R_{2\phi}}|
  -e^{-i\phi}|\Psi_{R_{2\phi}}\rangle\langle\psi(10)|+e^{i\phi}|\psi(10)\rangle\langle\Psi_{R_{2\phi}}|\right),
  \en
  and its extended Temperley--Lieb configuration is shown below
  \eq
  \label{TL B other form 2}
  \setlength{\unitlength}{0.5mm}
  \begin{array}{c}
  \begin{picture}(135,40)

  \put(-25,-2){\footnotesize{$B$}}

  \put(-31,29){\line(0,1){6}}
  \put(-31,11){\line(0,-1){6}}
  \put(-13,29){\line(0,1){6}}
  \put(-13,11){\line(0,-1){6}}

  \put(-34,8){\line(0,1){24}}
  \put(-10,8){\line(0,1){24}}
  \put(-34,8){\line(1,0){24}}
  \put(-34,32){\line(1,0){24}}

  \put(-31,29){\line(1,-1){18}}
  \put(-31,11){\line(1,1){7.3}}
  \put(-13,29){\line(-1,-1){7.3}}

  \put(-6,19){$=e^{i\frac 3 4 \pi}($}

  \put(32,5){\line(0,1){30}}
  \put(22,5){\line(0,1){30}}

  \put(40,19){$-$}

  \put(62,23){\line(0,1){12}}
  \put(52,23){\line(0,1){12}}
  \put(52,23){\line(1,0){10}}
  \put(62,29){\circle*{2.}}
  \put(64,28){\tiny{$X$}}

  \put(52,5){\line(0,1){12}}
  \put(62,5){\line(0,1){12}}
  \put(52,17){\line(1,0){10}}
  \put(62,11){\circle*{2.}}
  \put(64,11){\tiny{$X$}}

  \put(70,19){$-$}

  \put(92,23){\line(0,1){12}}
  \put(82,23){\line(0,1){12}}
  \put(82,23){\line(1,0){10}}
  \put(92,29){\circle*{2.}}
  \put(94,28){\tiny{$R_{2\phi}$}}

  \put(82,5){\line(0,1){12}}
  \put(92,5){\line(0,1){12}}
  \put(82,17){\line(1,0){10}}
  \put(92,11){\circle*{2.}}
 \put(94,10){\tiny{$R^\dag_{2\phi}$}}

  \put(95,19){$-e^{-i\phi}$}

  \put(127,23){\line(0,1){12}}
  \put(117,23){\line(0,1){12}}
  \put(117,23){\line(1,0){10}}
  \put(127,29){\circle*{2.}}
  \put(129,28){\tiny{$R_{2\phi}$}}

  \put(117,5){\line(0,1){12}}
  \put(127,5){\line(0,1){12}}
  \put(117,17){\line(1,0){10}}
  \put(127,11){\circle*{2.}}
  \put(129,10){\tiny{$X$}}

  \put(133,19){$+e^{i\phi}$}

  \put(162,23){\line(0,1){12}}
  \put(152,23){\line(0,1){12}}
  \put(152,23){\line(1,0){10}}
  \put(162,29){\circle*{2.}}
  \put(164,28){\tiny{$X$}}

  \put(152,5){\line(0,1){12}}
  \put(162,5){\line(0,1){12}}
  \put(152,17){\line(1,0){10}}
  \put(162,11){\circle*{2.}}
  \put(164,10){\tiny{$R^\dag_{2\phi}$}}

  \put(169,19){$)$}

  \end{picture}
  \end{array}
  \en
  which together with (\ref{TL B other form 1}) point out the fact that no transparent topological deformations exist between such two configurations,
  although they are algebraically equivalent.

\section{How to solve the constraint relation (\ref{relation construction tangle relation constraint 1})}

\label{appendix trial solution tangle relation}

For example, we make a sketch on how to derive the representation of the BMW algebra, such as (\ref{representation tangle relation E 1}) and
(\ref{representation tangle relation U 1}) or (\ref{representation tangle relation U 2}) from the constraint relation
(\ref{relation construction tangle relation constraint 1}). Set the unitary bases $U_{ij}$ as $U_{ij}=X^i Z^j$ and the unitary base matrix $U_{mn}$
as $U_{mn}=1\!\!1_2$.
Then the constraint relation (\ref{relation construction tangle relation constraint 1}) has the form
\eq
\label{tangle relation Pauli matrix 1}
\frac 1 2 \sum_{k,l=0}^1\mu_{ij}\mu_{kl}X^kZ^lX^iZ^jZ^lX^k=X^iZ^j,
\en
which can be reformulated as
\eq
\label{tangle relation Pauli matrix 2}
\frac 1 2 \sum_{k,l=0}^1\mu_{ij}\mu_{kl}(-1)^{i\cdot l}(-1)^{k\cdot j}X^iZ^j=X^iZ^j.
\en
Since the unitary bases $U_{ij}$ satisfy the orthonormal relation (\ref{orthonormal condition U}), we have the constraint equation of the eigenvalues $\mu_{ij}$
of the Yang--Baxter gate $U$ (\ref{relation spectral theorem U gate}),
\eq
\label{tangle relation Pauli matrix 3}
\sum_{k,l=0}^1\mu_{ij}\mu_{kl}(-1)^{i\cdot l}(-1)^{k\cdot j}=2,
\en
which represent a set of equations given by
\eqa
  \mu_{00}\left(\mu_{00}+\mu_{01}+\mu_{10}+\mu_{11}\right)&=&2; \\
  \mu_{01}\left(\mu_{00}+\mu_{01}-\mu_{10}-\mu_{11}\right)&=&2; \\
  \mu_{10}\left(\mu_{00}-\mu_{01}+\mu_{10}-\mu_{11}\right)&=&2; \\
  \mu_{11}\left(\mu_{00}-\mu_{01}-\mu_{10}+\mu_{11}\right)&=&2.
  \ena
Solving the above equations, we have three classes of solutions for the eigenvalues $\mu_{ij}$ as below.
\begin{itemize}
  \item Class 1: $\mu_{00}=e^{i\phi}$, $\mu_{01}=e^{-i\phi}$, $\mu_{10}=e^{-i\phi}$, $\mu_{11}=-e^{i\phi}$, which determine
  the Yang--Baxter gate $U$ (\ref{relation spectral theorem U gate}) as
  \eq
     U=\sum_{i,j=0}^1\mu_{ij}|\psi(ij)\rangle\langle\psi(ij)|=\left(
                                                                \begin{array}{cccc}
                                                                  \cos\phi & 0 & 0 & i\sin\phi \\
                                                                  0 & -i\sin\phi & \cos\phi & 0 \\
                                                                  0 & \cos\phi & -i\sin\phi & 0 \\
                                                                  i\sin\phi & 0 & 0 & \cos\phi \\
                                                                \end{array}
                                                              \right).
    \en

  \item Class 2: $\mu_{00}=e^{i\phi}$, $\mu_{01}=e^{-i\phi}$, $\mu_{10}=-e^{i\phi}$, $\mu_{11}=e^{-i\phi}$, which determine
  the Yang--Baxter gate $U$ (\ref{relation spectral theorem U gate}) as
  \eq
     U=\sum_{i,j=0}^1\mu_{ij}|\psi(ij)\rangle\langle\psi(ij)|=\left(
                                                                \begin{array}{cccc}
                                                                  \cos\phi & 0 & 0 & i\sin\phi \\
                                                                  0 & -i\sin\phi & -\cos\phi & 0 \\
                                                                  0 & -\cos\phi & -i\sin\phi & 0 \\
                                                                  i\sin\phi & 0 & 0 & \cos\phi \\
                                                                \end{array}
                                                              \right).
  \en

  \item Class 3: $\mu_{00}=e^{i\phi}$, $\mu_{01}=-e^{i\phi}$, $\mu_{10}=e^{-i\phi}$, $\mu_{11}=e^{-i\phi}$, which determine
  the Yang--Baxter gate $U$ (\ref{relation spectral theorem U gate}) as
  \eq
     U=\sum_{i,j=0}^1\mu_{ij}|\psi(ij)\rangle\langle\psi(ij)|=\left(
                                                                \begin{array}{cccc}
                                                                  0 & 0 & 0 & e^{i\phi} \\
                                                                  0 & e^{-i\phi} & 0 & 0 \\
                                                                  0 & 0 & e^{-i\phi} & 0 \\
                                                                  e^{i\phi} & 0 & 0 & 0 \\
                                                                \end{array}
                                                              \right).
  \en
\end{itemize}

Similarly, when the unitary bases $U_{mn}$ are the Pauli gates $X$, $Z$ and $XZ$, respectively, the other solutions of the Yang--Baxter gate $U$ (\ref{relation spectral theorem U gate}) can be obtained. The collection of all the solutions has been presented in (\ref{representation tangle relation E 1}) and
(\ref{representation tangle relation U 1}) or (\ref{representation tangle relation U 2}), or (\ref{representation tangle relation E 2}) and
(\ref{representation tangle relation U 3}) or (\ref{representation tangle relation U 4}).

\section{Reformulation of the constraint relations (\ref{relation construction tangle relation constraint 1})-(\ref{relation construction tangle relation constraint 4}) and (\ref{relation construction general tangle relation constraint 1})-(\ref{relation construction general tangle relation constraint 4}) }

\label{skew transpose}

Both the constraint relations (\ref{relation construction tangle relation constraint 1})-(\ref{relation construction tangle relation constraint 4}) and the constraint relations (\ref{relation construction general tangle relation constraint 1})-(\ref{relation construction general tangle relation constraint 4}) looking complicated, we introduce the new conventions to simplify their formulations.
We define the skew-transpose on the product of two matrices as
\eq
(B\, C)^{ST}\equiv B^T\,C^T,
\en
where the skew-transposition $ST$  does not interchange $B^T$ and $C^T$ as the ordinary transpose does. With the new notations given by
\eq
\label{definition three operator in constraint relation}
\eta_{ijkl}\equiv\frac 1 2\mu_{ij}\mu_{kl}; \quad O_{\alpha\beta}\equiv U^\dag_{mn}U_{\alpha\beta},
\en
with specified indices $m$ and $n$, the constraint relations (\ref{relation construction tangle relation constraint 1})-(\ref{relation construction tangle relation constraint 4}) have the simplified forms
\eqa
\label{relation simplifed constraint 1}\sum_{k,l=0}^1\eta_{ijkl}O_{kl}{O^{ST}_{ij}}^\dag O^\dag_{kl} &=& {O^{ST}_{ij}}^\dag; \\
\label{relation simplifed constraint 2}\sum_{k,l=0}^1\eta_{ijkl}O^{ST}_{kl}O^\dag_{ij}{O_{kl}^{ST}}^\dag &=& O^\dag_{ij}; \\
\label{relation simplifed constraint 3}\sum_{k,l=0}^1\eta_{ijkl}O_{kl}O^{ST}_{ij} O_{kl}^\dag &=& O^{ST}_{ij}; \\
\label{relation simplifed constraint 4}\sum_{k,l=0}^1\eta_{ijkl}O^{ST}_{kl}O_{ij}{O_{kl}^{ST}}^\dag &=& O_{ij},
\ena
where the skew-transpose $ST$ is commutative with the Hermitian conjugation $\dag$.
As a remark, the notation $O_{\alpha\beta}$ is introduced to remove the indices $m$ and $n$
so that the algebraic structure of the constraint relations (\ref{relation construction tangle relation constraint 1})-(\ref{relation construction tangle relation constraint 4})
is presented in a more transparent way. Furthermore, with the new notation
\eq
\label{definition three operator in constraint relation}
\eta_{i_1j_1i_2j_2k_1l_1k_2l_2}\equiv\frac 1 2 G_{i_1j_1,k_1l_1}\tilde G_{i_2j_2,k_2l_2},
\en
the constraint relations (\ref{relation construction general tangle relation constraint 1})-(\ref{relation construction general tangle relation constraint 4}) have more simplified forms
\eqa
\label{relation general simplifed constraint 1}\sum_{k,l=0}^1\eta_{i_1j_1i_2j_2k_1l_1k_2l_2}O_{i_2j_2}{O^{ST}_{k_1l_1}}^\dag O^\dag_{k_2l_2} &=& {O^{ST}_{i_1j_1}}^\dag; \\
\label{relation general simplifed constraint 2}\sum_{k,l=0}^1\eta_{i_1j_1i_2j_2k_1l_1k_2l_2}O^{ST}_{i_2j_2}O^\dag_{k_1l_1}{O_{k_2l_2}^{ST}}^\dag &=& O^\dag_{i_1j_1}; \\
\label{relation general simplifed constraint 3}\sum_{k,l=0}^1\eta_{i_1j_1i_2j_2k_1l_1k_2l_2}O_{i_2j_2}O^{ST}_{k_1l_1}O_{k_2l_2}^\dag &=& O^{ST}_{i_1j_1}; \\
\label{relation general simplifed constraint 4}\sum_{k,l=0}^1\eta_{i_1j_1i_2j_2k_1l_1k_2l_2}O^{ST}_{i_2j_2}O_{k_1l_1}{O_{k_2l_2}^{ST}}^\dag &=& O_{i_1j_1}.
\ena
As a concluding remark, we hope that such the above reformulations of the constraint relations (\ref{relation construction tangle relation constraint 1})-(\ref{relation construction tangle relation constraint 4}) and (\ref{relation construction general tangle relation constraint 1})-(\ref{relation construction general tangle relation constraint 4}) are meaningful and useful elsewhere.

\end{document}